\newcommand{\R}{\mathbb{R}}
\newcommand{\X}{\mathbb{X}}
\newcommand{\U}{\mathbb{U}}
\newcommand{\Lc}{\mathcal{L}}
\newcommand{\Bc}{\mathcal{B}}
\DeclareMathOperator{\diag}{diag}
\newtheorem{remark}{Remark}
\newtheorem{definition}{Definition}
\newtheorem{theorem}{Theorem}
\title{Nonlinear parameter-varying state-feedback design for a gyroscope using virtual control contraction metrics}
\author[1]{Ruigang Wang}
\author[2]{Patrick J.W. Koelwijn}
\author[1]{Ian R. Manchester}
\author[2,3]{Roland T\'{o}th}
\affil[1]{Australian Centre for Field Robotics \& Sydney Institute for Robotics and Intelligent Systems, The University of Sydney, Sydney, NSW 2006, Australia}
\affil[2]{Department of Electrical Engineering, Eindhoven University of Technology, Eindhoven, The Netherlands}
\affil[3]{Systems and Control Laboratory, Institute for Computer Science and Control, Budapest, Hungary}
\begin{document}
	
\maketitle

\begin{abstract}	
In this paper, we present a virtual control contraction metric (VCCM) based nonlinear parameter-varying (NPV) approach to design a state-feedback controller for a control moment gyroscope (CMG) to track a user-defined trajectory set. This VCCM based nonlinear stabilization and performance synthesis approach, which is similar to linear parameter-varying (LPV) control approaches, allows to achieve exact guarantees of exponential stability and $\mathcal{L}_2$-gain performance on nonlinear systems with respect to all trajectories from the predetermined set, which is not the case with the conventional LPV methods. Simulation and experimental studies conducted in both fully- and under-actuated operating modes of the CMG show effectiveness of this approach compared to standard LPV control methods.	
\end{abstract}

\section{Introduction}

With increasing performance expectations and growing complexity of engineered systems, industrial control practice faces with achieving stabilization and shaping of the behavior of nonlinear dynamical systems. To address these problems, one possible methodology, which has seen rapid growth over the last few decades with many successful applications, is the so-called \emph{linear parameter-varying} (LPV) approach. In the LPV framework, the behavior, i.e., solution set, of a nonlinear system is embedded in an LPV representation, which has a linear dynamic relation between its inputs and outputs \cite{Toth2010SpringerBook}. This linear relation is dependent on a so-called \emph{scheduling variable}, a function of the states, inputs and/or outputs, that represents the nonlinear dynamical aspects of the original nonlinear system. The scheduling variable is assumed to be measurable in the system. This idea has allowed the successful extension of many analysis and synthesis tools of the \emph{linear time-invariant} (LTI) framework, such as the $\mathcal{L}_2$-gain stability and performance concept \cite{Apkarian:1995,Scherer1995,Wu:2006,Xie:2018}, to provide convex analysis and controller synthesis for nonlinear systems through the LPV framework. More recently, extensions have  been made to so-called \emph{nonlinear parameter-varying} (NPV) systems, where some nonlinear dynamics are still included in the model allowing for a less conservative representation of the nonlinear system \cite{Cai2015,Sala2019,Rotondoi:2019}. However, convex analysis and synthesis results are more difficult to obtain, due to the system not being linear, as is the case in the LPV framework.

While, the analysis and synthesis results of the LPV framework have successfully been applied in many engineering problems \cite{Mohammadpour2012,Hoffmann:2015}, recent research has shown that naively applying these results to nonlinear systems can result in incorrect analysis conclusions or unwanted closed-loop behavior in case of synthesis \cite{Scorletti:2015,Koelewijn:2019,Koelewijn:2020}. These issues stem from the fact that unlike LTI systems, stability properties of the origin and other forced equilibria are not equivalent for nonlinear systems. As a consequence, stability guarantees of an LPV/NPV embedding for the origin extend to that of the respective nonlinear model, but such guarantees are not sufficient to imply stability of all forced equilibria of the represented nonlinear system \cite{Koelewijn:2020}. 

As it turns out, the loss of guarantees are attributed to the used equilibrium-dependent stability notion -- widely applied in LPV control -- raising the question if with a different equilibrium-free stability concept such problems could be avoided without losing the convexity and attractive properties of LPV approaches. As an alternative, the concept of \emph{universal stabilization} aims to achieve exponential stability of all trajectories of the system \cite{Manchester:2017}. By a so-called \emph{control contraction metric} (CCM), analogous to the \emph{control Lyapunov function} (CLF) for a single equilibrium (or trajectory) \cite{Sontag:1983}, convex conditions can be derived for analysis and synthesis under universal stabilization \cite{Manchester:2018}. The ideas behind of these methods build on the concept of contraction analysis \cite{Lohmiller:1998,Forni:2014}, where analysis of convergence of the infinitesimal variations of the system around all trajectories (i.e., local stability of all trajectories) is equivalent with universal stability of the system (i.e., global stability of all trajectories). This leads to analysis and synthesis problems for a family of local linear systems, called \emph{differential dynamics}, that can be elegantly expressed as an LPV system and solved by LPV synthesis tools to give exact stability and performance guarantees on the resulting closed-loop nonlinear system through the CCM approach.

While the use of CCM allows to achieve universal stability and performance with LPV control, it may be a too strict notion if stabilization of only a particular subset of reference trajectories is required, as is common in tracking control. Hence, the notion of $\mathcal{B}^*$-\emph{universal stabilizability} has been introduced where stability of a subset of trajectories, denoted by $\mathcal{B}^*$, is aimed at, and which can be analyzed through so-called \emph{virtual control contraction metrics} (VCCMs) \cite{wang2020virtual}. The concept of VCCMs combines the notion of virtual systems \cite{Wang:2005} and CCMs. An earlier work of virtual contraction theory in control design can be found in \cite{Jouffroy:2010}. Some recent works include control synthesis for a special case of mechanical systems \cite{Manchester:2018a} and further extension to port-Hamiltonian systems \cite{Reyes:2019b,Reyes2020}. The main idea of virtual systems is that a nonlinear system, which is not itself contracting, may have weaker stability properties that can be established via construction of an auxiliary (virtual) system which is contracting. Furthermore, the virtual system can be seen as an NPV embedding of the dynamics of the original system. Then, its differential dynamics can still be expressed as a local LPV system, which allows the use of convex LPV synthesis results through the CCM approach, but with extended feasibility due to the reduced conservativeness of the embedding. The VCCM based control approach can achieve $\mathcal{B}^*$-universal stabilization and $\mathcal{L}_2$-gain performance guarantees for nonlinear systems. In contrast to the reference-dependent variable-gain tracking control approaches for linear systems \cite{Wouw:2008,Loon:2017}, the VCCM approach can deal with nonlinear systems and yield controllers whose gain depends on  both states and references.

In this paper the VCCM based controller design is applied in order to achieve $\mathcal{B}^*$-universal stabilizability and performance shaping for a \emph{control moment gyroscope} (CMG). CMGs have been widely applied in attitude control of ships \cite{Perez:2009}, satellites \cite{Lappas:2005}, and the international space station \cite{Gurrisi:2010}. They represent a challenging nonlinear system and are often used for the demonstration of nonlinear control methods \cite{Reyhanoglu:2006,Abbas:2014}. Two control configurations (fully- and under-actuated modes) of the CMG are considered. Furthermore, the influence of the used LPV or NPV embedding for the CMG on the achieved controller performance is investigated by constructing controllers using both type of embeddings in the VCCM based controller design approach. The simulation and experimental studies show how the choice of embedding model and control realization affects the closed-loop tracking performance. This type of question is not well-addressed in the LPV literature. The comparison results show that the NPV approach can ensure closed-loop stability and performance for user-specified tracking tasks while the conventional LPV approach may not provide such guarantees. 

The paper is structured as follows. In Section \ref{sec:problem}, a formal problem formulation is given, along with a description of the considered dynamical model for the CMG. Section \ref{sec:methodology} describes the VCCM based controller design for NPV embeddings. Section \ref{sec:cmgcontroldesign} details the controller designs applied to the CMG. In Section \ref{sec:result}, a simulation study is presented for the CMG using the introduced controller design methods and the results are thoroughly analyzed both form the view point of stability and achieved performance. Finally, in Section \ref{sec:conclusion}, concluding remarks on the presented work are given.

\subsection*{Notation}
$ \R $ is the set of real numbers, while $ \R_+ $ is the set of non-negative reals. Let $ (x,y) $ denote the vector concatenation of $ x\in\R^n,\,y\in\R^m $, i.e., $ (x,y):=[x^\top\; y^\top]^\top\in\R^{n+m} $. $ \mathcal{L}_2 $ is the space of square-integrable vector signals on $ \R_+ $, i.e., $ \|x\|_2:=\sqrt{\int_{0}^{\infty}|x(t)|^2dt}<\infty $ where $ |\cdot| $ is the Euclidean norm. The causal truncation $ (\cdot)_T $ is defined by $ (x)_T(t):=x(t) $ for $ t\in[0,T] $ and 0 otherwise. $ \mathcal{L}_2^\mathrm{e} $ is the space of vector signals on $ \R_+ $ whose causal truncation belongs to $ \mathcal{L}_2 $. For a matrix $ A$, $ A\succ 0 $ or $ A\succeq 0$ means that $ A $ is positive definite or positive semi-definite. Similarly $ A\prec 0 $ or $ A\preceq 0 $ means that $ A $ is negative definite or negative semi-definite. A Riemannian metric is a smooth matrix function $M:\R^n\rightarrow \R^{n\times n}$ with $M(x)\succ 0$ for all $x\in\R^n$. A metric $M(x)$ is said to be uniformly-bounded if there exist $a_2\geq a_1>0$ such that $a_1I\succeq M(x)\succeq a_2I$ for all $x\in\R^n$. Let $\gamma(x_0,x_1)$ be the set of smooth paths connecting $x_0$ to $x_1$, that is, each $c\in\Gamma(x_0,x_1)$ is a smooth map $c:[0,1]\rightarrow \R^n$ with $c(0)=x_0$ and $c(1)=x_1$. Given a metric $M(x)$, a geodesic $\gamma $ is a (non-unique) minimum length path defined by $\gamma:=\arg\inf_{c\in\Gamma(x_0,x_1)}\mathcal{E}(c)$ where $\mathcal{E}(c):=\int_0^1 c_s^\top M(c(s))c_s ds$. If $M$ is independent of $x$, then $\gamma$ is the straight line $\gamma(s)=(1-s)x_0+sx_1$. 

\section{Problem Formulation}\label{sec:problem}

\begin{figure}[!bt]
	\centering
	\begin{tabular}{cc}
		\includegraphics[width=0.28\linewidth]{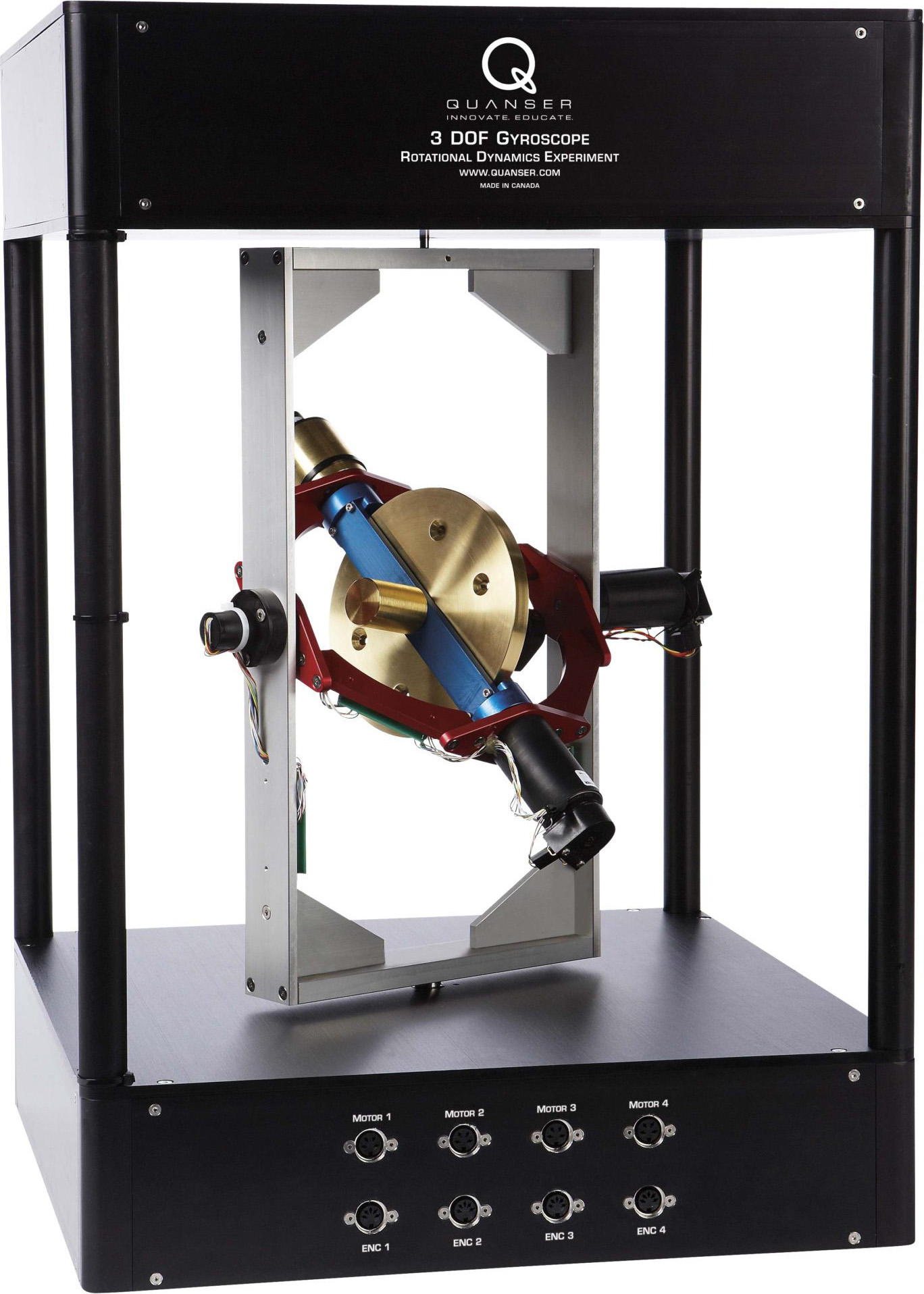} \hspace{0.05\linewidth} &
		\includegraphics[width=0.55\linewidth]{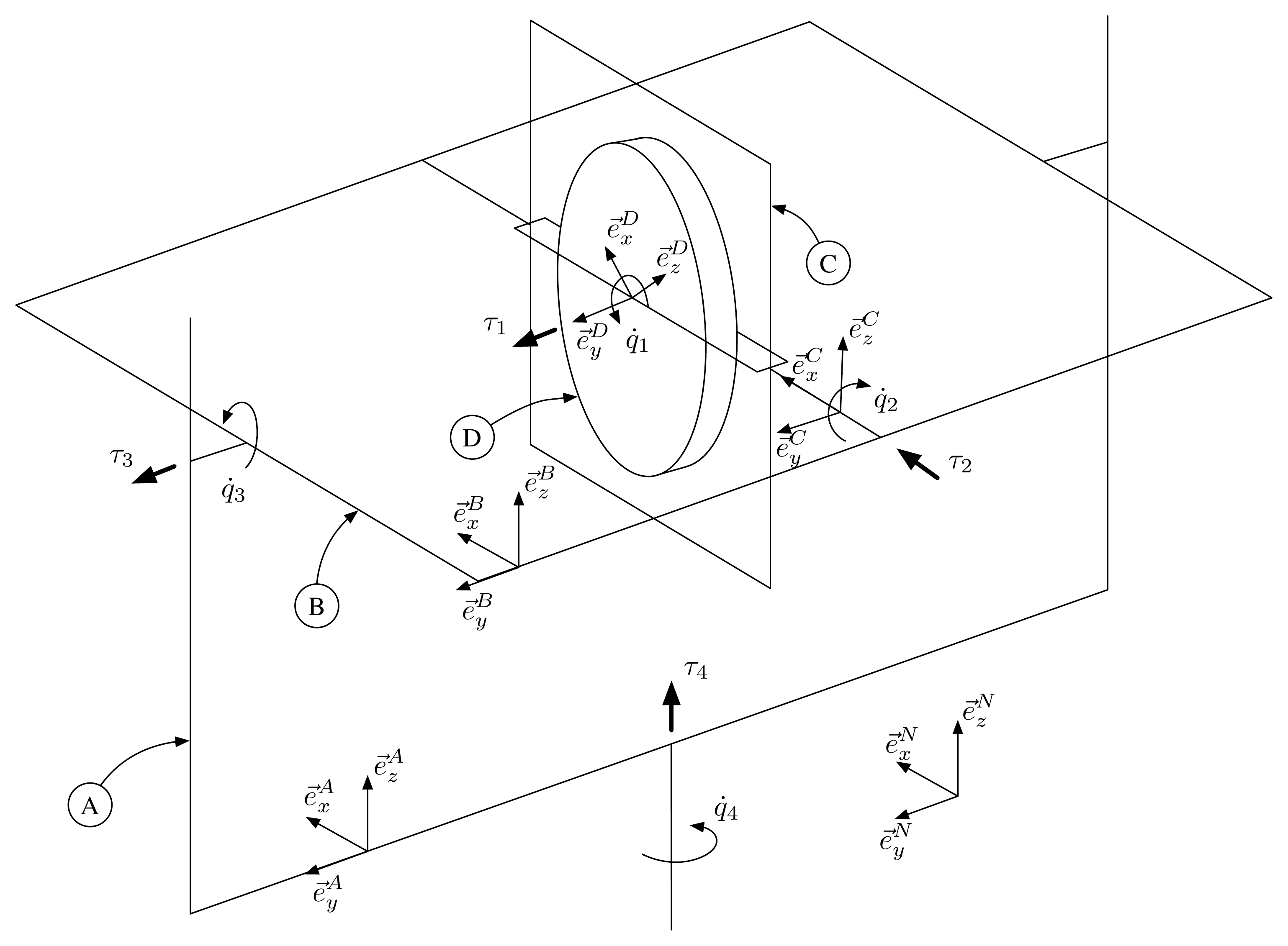} \\
		{\scriptsize (a) Setup } & {\scriptsize (b) Configuration}
	\end{tabular}
	\caption{The control moment gyroscope.}\label{fig:Gyro_Schematic}
\end{figure}

In this paper, we consider a 3-DOF CMG, see Figure \ref{fig:Gyro_Schematic}(a), consisting of three gimbals ($\mathrm{A}$, $\mathrm{B}$ and $\mathrm{C}$) along with a symmetric disk ($\mathrm{D}$), called the fly-wheel. The configuration of the CMG is depicted in Figure \ref{fig:Gyro_Schematic}(b). Let $ q=(q_1,q_2,q_3,q_4) $ be the generalized angular position vector and $ i=(i_1,i_2,i_3,i_4) $ be the motor currents vector. Here the index $ 1,2,3,4 $ refers to the frame $ \mathrm{D,C,B,A} $, respectively. The dynamics of the gyroscope can be represented by (\cite{Bloemers:2019,Berkel:2008})
\begin{equation}\label{eq:Lagrange-eq}
H(q)\ddot{q}+\bigl( C(q,\dot{q})+F_v \bigr) \dot{q}=K_mi,
\end{equation}
where $ F_v=\diag(f_v) $ with $ f_v $ as the viscous friction vector, $ K_m=\diag(k_m) $ with $ k_m $ as the motor constant vector. 
The inertia matrix is given as $ H(q) =\sum_{k \in \mathcal{S}} H_k(q_2,q_3) $ where the inertia matrices for each frame in $\mathcal{S}=\{\mathrm{A,B,C,D}\}$ are listed as follows:
\begin{align*}\allowdisplaybreaks[4]
	H_\mathrm{A} &= 
	\begin{bmatrix}
		0 & 0 & 0 & 0 \\
		\star & 0 & 0 & 0 \\
		\star & \star & 0 & 0 \\
		\star & \star & \star & K_\mathrm{A} 
	\end{bmatrix},\quad
	H_\mathrm{B} = 
	\begin{bmatrix}
		0 & 0 & 0 & 0 \\
		\star & 0 & 0 & 0 \\
		\star & \star & J_\mathrm{B} & 0 \\
		\star & \star & \star & I_\mathrm{B}s^2_3 + K_\mathrm{B}c^2_3 
	\end{bmatrix}, \\
	H_\mathrm{C} &= 
	\begin{bmatrix}
		0 & 0 & 0 & 0 \\
		\star & I_\mathrm{C} & 0 & -I_\mathrm{C}s_3 \\
		\star & \star & J_\mathrm{C}c^2_2+K_\mathrm{C}s^2_2 & \alpha_1s_2c_2c_3 \\
		\star & \star & \star & I_\mathrm{C}s^2_3+(J_\mathrm{C}s^2_2+K_\mathrm{C}c_2^2)c^2_3 
	\end{bmatrix}, \\
	H_\mathrm{D} &= 
	\begin{bmatrix}
		J_\mathrm{D} & 0 & J_\mathrm{D}c_2 & J_\mathrm{D}s_2c_3 \\
		\star & I_\mathrm{D} & 0 & -I_\mathrm{D}s_3 \\
		\star & \star & I_\mathrm{D}s^2_2+J_\mathrm{D}c^2_2 & \alpha_2s_2c_2c_3 \\
		\star & \star & \star & I_\mathrm{D}s^2_3+(I_\mathrm{D}c^2_2+J_\mathrm{D}s^2_2)c^2_3 
	\end{bmatrix},
\end{align*}
with $ \alpha_1 = J_\mathrm{C}-K_\mathrm{C} $ and $ \alpha_2 = J_\mathrm{D}-I_\mathrm{D} $. For compactness and readability, sinusoidal functions are abbreviated as $s_i$ and $c_i$, e.g., $\sin q_2 = s_2$  and $\cos^2q_3 = c^2_3$. The terms $I_k$, $J_k$, $K_k$ with $k \in \mathcal{S}$ are the scalar moments of inertia about the $x$, $y$, $z$ axes respectively for the bodies $k$. The symbol $ \star $ denotes terms required to make the matrix symmetric. 

The elements of the Coriolis matrix $C(q,\dot{q})$ can be computed as:
\begin{align}
	C(q,\dot{q}) = \begin{bmatrix}
		\dot{q}^\top & 0 & 0 & 0 \\
		0 & \dot{q}^\top & 0 & 0 \\
		0 & 0 & \dot{q}^\top & 0 \\
		0 & 0 & 0 & \dot{q}^\top
	\end{bmatrix}
	\begin{bmatrix}
		\Gamma^1(q) \\ \Gamma^2(q) \\ \Gamma^3(q) \\ \Gamma^4(q)
	\end{bmatrix},
\end{align}
where
\begin{equation*}\allowdisplaybreaks[4]
	\begin{split}
		\Gamma^1 &= \frac{1}{2}
		\begin{bmatrix}
			0 & 0 & 0 & 0 \\
			\star & 0 & -J_\mathrm{D}s_2 & J_\mathrm{D}c_2c_3 \\
			\star & \star & 0 & -J_\mathrm{D}s_2s_3 \\
			\star & \star & \star & 0
		\end{bmatrix}, \quad
		\Gamma^2 = \frac{1}{2}
		\begin{bmatrix}
			0 & 0 & J_\mathrm{D}s_2 & -J_\mathrm{D}c_2c_3 \\
			\star & 0 & 0 & 0 \\
			\star & \star & -2\alpha_3s_2c_2 & \alpha_3(c^2_2c_3-s^2_2c_3)-\alpha_4c_3 \\
			\star & \star & \star & \alpha_3c_2c^2_3s_2
		\end{bmatrix}, \\
		\Gamma^3 &= \frac{1}{2}
		\begin{bmatrix}
			0 & -J_\mathrm{D}s_2 & 0 & J_\mathrm{D}s_2s_3 \\
			\star & 0 & 2\alpha_3s_2c_2 & \alpha_4c_3 + \alpha_3(c_3s^2_2-c^2_2c_3) \\
			\star & \star & 0 & 0 \\
			\star & \star & \star & -(\alpha_5+\alpha_3s^2_2)c_3s_3
		\end{bmatrix}, \\
		\Gamma^4 &= \frac{1}{2}
		\begin{bmatrix}
			0 & J_\mathrm{D}c_2c_3 & -J_\mathrm{D}s_2s_3 & 0 \\
			\star & 0 & \alpha_3(c_3s^2_2-c^2_2c_3)-\alpha_4c_3 & -\alpha_3c_2c^2_3s_2 \\
			\star & \star & \alpha_3c_2s_2s_3 & (\alpha_5+\alpha_3s^2_2)c_3s_3 \\
			\star & \star & \star & 0
		\end{bmatrix},
	\end{split}
\end{equation*}
with $ \alpha_3 = I_\mathrm{D}-J_\mathrm{C}-J_\mathrm{D}+K_\mathrm{C} $, $ \alpha_4 = I_\mathrm{C}+I_\mathrm{D} $ and $ \alpha_5 = I_\mathrm{B}+I_\mathrm{C}-K_\mathrm{B}-K_\mathrm{C} $. The physical parameters of the gyroscope are given in the Table~\ref{tab:parameter}. Here we are interested in tracking control for the following two operating modes:
\begin{itemize}
	\item OM-1: Gimbal $ \mathrm{A} $ is locked, i.e. $ (q_4,\dot{q}_4)=0 $ and $ i_4=0 $. The control objective is to track a set of reference signals of $ \dot{q}_1,q_2 $ and $ q_3 $ using the input $ (i_1,i_2,i_3) $.
	\item OM-2: Gimbal $ \mathrm{B} $ is locked, i.e. $ (q_3,\dot{q}_3)=0 $ and $ i_3=0 $. The control objective is set-point tracking for $ \dot{q}_1 $ and $ q_4 $ by using the input $ (i_1,i_2) $. In this case, the motor on gimbal $\mathrm{A}$  is switched off, i.e. $ i_4=0 $, and hence the system is underactuated. 
\end{itemize}
Note that OM-1 is relatively easy to control as the CMG is fully-actuated. For OM-2, control design is a challenging task as the dynamics is highly nonlinear and under-actuated.

\begin{table}[!bt]
	\centering
	\caption{Model parameters of CMG.}\label{tab:parameter}
	\begin{tabular}{ccccccc}
		\toprule
		Index & \multicolumn{3}{c}{Moments} & Index & \multicolumn{2}{c}{Constants} \\ \midrule
		$ k $ & $ I $ & $ J $ & $ K $ & $ i $ & $ f_v $ & $ k_m $ \\ \midrule
		$ \mathrm{A} $ & \num{0.0902} & \num{0.0534} & \num{0.0374} & 1 & \num{1.1050e-5} & \num{0.0680} \\
		$ \mathrm{B} $ & \num{0.0039} & \num{0.0186} & \num{0.0200} & 2 & \num{1.2420e-5} & \num{0.1006} \\
		$ \mathrm{C} $ & \num{9.2087e-4} & \num{0.0016} & \num{0.0026} & 3 & \num{0.0141} & \num{0.1053} \\
		$ \mathrm{D} $ & \num{0.0030} & \num{0.0055} & \num{0.0374} & 4 & \num{0.0327} & \num{0.0606} \\ \bottomrule
	\end{tabular}
\end{table}

\section{Parameter-Varying Embeddings and Virtual Control Contraction Metrics}\label{sec:methodology}

\subsection{Control via NPV embedding}\label{sec:npv-control}
Consider nonlinear (NL) systems of the form
\begin{equation}\label{eq:system}
	\dot{x}=f(x,u),
\end{equation}
where $ x(t)\in\X\subseteq\R^n$ is the measured state and $ u(t)\in\U\subseteq\R^m$ is the control input. The function $ f $ is assumed to be sufficiently smooth. We define a \emph{reference trajectory} $ (x^*,u^*) $ to be a forward-complete solution of \eqref{eq:system}. A reference trajectory is said to be globally exponentially stabilizable if there exist a state feedback controller of the form
\begin{equation}\label{eq:controller}
	u=\kappa(x,x^*,u^*),
\end{equation}
where $ \kappa:\X\times\X\times\U\rightarrow\U $ such that the closed-loop (CL) system $ \dot{x}=f(x,\kappa(x,x^*,u^*)) $ is globally exponentially stable at $ (x^*,u^*) $, i.e., 
\begin{equation}
	|x(t)-x^*(t)|\leq Re^{-\lambda t}|x(0)-x^*(0)|,\quad \forall t>0,
\end{equation}
for some constants $ \lambda,R>0 $.

We will present a systematic approach to design controllers of the form \eqref{eq:controller} that achieve globally exponential stability for any reference trajectory $ (x^*,u^*) $ from a user-defined set $ \Bc^* $. If such controllers exist, we call system \eqref{eq:system} \emph{$ \Bc^* $-universally stabilizable}. Furthermore, if $ \Bc^* $ contains all reference trajectories, we simply call \eqref{eq:system} universally stabilizable. Note that depending on the choice of $ \Bc^* $, the task could be regulation, set-point tracking or reference tracking.  

In this work, we will first construct a \emph{virtual system} for \eqref{eq:system}, which is a new system of the form:
\begin{equation}\label{eq:sys-npv}
	\dot{\chi}=F(\chi,x,\mu),
\end{equation}
with the property of $ F(x,x,u)=f(x,u),\,\forall (x,u)\in\X\times\U $, where the virtual state $ \chi(t)\in\X$ and the virtual input $\mu(t)\in\U$ live in a copy of the state/input spaces of \eqref{eq:system}, and the external variable $ x(t) $ is taken as the state of \eqref{eq:system}. Note that the virtual system \eqref{eq:sys-npv} can also be understood as a nonlinear parameter-varying (NPV) embedding of \eqref{eq:system} since the behavior (solution set) of \eqref{eq:system} can be embedded into the behavior of \eqref{eq:sys-npv} via the map $ F $, which is called the \emph{behavior embedding principle}. The control design based on the behavior embedding principle usually includes three steps: the choice of a NPV model \eqref{eq:sys-npv},  the control synthesis based on it and the realization of the controller for the original system \eqref{eq:system}.

Note that the NPV embedding \eqref{eq:sys-npv} is not unique as there are various choices in terms of what level of nonlinearity is ``hidden'' in the external parameter. For example, the linear parameter-varying (LPV) form (\cite{Toth2010SpringerBook,Hoffmann:2015}) is an embedding where $ F $ is linear in $ \chi $ and $ \mu $. Furthermore, the system \eqref{eq:system} is a trivial embedding of itself when the full nonlinearity is considered. Note that an LPV embedding allows for simpler control synthesis, but may lead to conservative results. Compared to the linear (standard LPV) case, some recent works \cite{Rotondoi:2019} show that the performance can be improved by considering certain level of system nonlinearity. Here we construct the NPV embedding \eqref{eq:sys-npv} such that the following two conditions are satisfied:
\begin{itemize}
\item[{\bf C1})] For any trajectory $ x $ of system \eqref{eq:system}, the virtual system \eqref{eq:sys-npv} can be universally stabilized by a controller of the form
\begin{equation}\label{eq:control-fb}
	\mu=\mu^*+\kappa^{\mathrm{fb}}(\chi,\chi^*,x),
\end{equation}
with $ \kappa^{\mathrm{fb}}:\X\times\X\times\X\rightarrow\U $ and $ \kappa^{\mathrm{fb}}(\chi^*,\chi^*,x)=0,\ \forall \chi^*,x\in\mathbb{X} $, where $ (\chi^*,x,\mu^*) $ is an admissible trajectory of \eqref{eq:sys-npv}.
\item[{\bf C2})] There exists a controller $ \kappa^{\mathrm{ff}}:\X\times\X\times\U\rightarrow\U $ such that for any reference trajectory $ (x^*,u^*)\in \Bc^* $ and any trajectory $ x $ of system \eqref{eq:system}, $ (x^*,x,\mu^*) $ is a feasible solution to \eqref{eq:sys-npv}, where the feed-forward input $ \mu^* $ is given by
\begin{equation}\label{eq:control-ff-strong}
	\mu^*=\kappa^{\mathrm{ff}}(x,x^*,u^*).
\end{equation}
\end{itemize}

The following theorem gives a NPV controller that achieves $ \Bc^* $-universal stability for \eqref{eq:system}.
\begin{theorem}[\cite{wang2020virtual}]\label{thm:npv}
Consider the NL system \eqref{eq:system} and a reference set $ \Bc^* $. If there exists a NPV embedding \eqref{eq:sys-npv} such that Conditions {\bf C1} and {\bf C2} hold, then \eqref{eq:system} is $ \Bc^* $-universally stable under the controller
\begin{equation}\label{eq:control-output}
	u=\kappa^{\mathrm{ff}}(x,x^*,u^*)+\kappa^{\mathrm{fb}}(x,x^*,x). 
\end{equation}
\end{theorem}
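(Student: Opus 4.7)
The plan is to exploit the behavior-embedding identity $F(x,x,u)=f(x,u)$ in order to exhibit both the actual closed-loop trajectory $x(\cdot)$ of \eqref{eq:system} and the reference $x^*(\cdot)$ as two distinct solutions of one and the same closed-loop virtual system, driven by a common external signal $x(\cdot)$. The conclusion then drops out of the universal stabilizability guarantee promised by Condition \textbf{C1}.

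First, I would substitute the proposed controller \eqref{eq:control-output} into \eqref{eq:system} and use $F(x,x,u)=f(x,u)$ to rewrite the closed-loop dynamics as $\dot{x}=F(x,x,\mu)$ with $\mu=\kappa^{\mathrm{ff}}(x,x^*,u^*)+\kappa^{\mathrm{fb}}(x,x^*,x)$. Setting $\mu^*:=\kappa^{\mathrm{ff}}(x,x^*,u^*)$, this $\mu$ has exactly the form \eqref{eq:control-fb} evaluated at $\chi=x$ and $\chi^*=x^*$. Hence $\chi(t)=x(t)$ is a solution of the virtual system \eqref{eq:sys-npv} in closed loop with the feedback law of \textbf{C1}, for the reference trajectory $\chi^*=x^*$ and feedforward $\mu^*$.

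Next, I would argue that $\chi(t)=x^*(t)$ is another solution of that same closed-loop virtual system. By \textbf{C2}, $(x^*,x,\mu^*)$ is an admissible trajectory of \eqref{eq:sys-npv}, i.e., $\dot{x}^*=F(x^*,x,\mu^*)$; and since $\kappa^{\mathrm{fb}}(\chi^*,\chi^*,x)=0$, the feedback law evaluated at $\chi=\chi^*=x^*$ reduces to $\mu^*$. Thus $\chi=x^*$ satisfies the identical closed-loop virtual dynamics, driven by the very same external signal $x(\cdot)$. Applying the universal stabilizability of the virtual system asserted by \textbf{C1} for this particular $x$, the two solutions $\chi=x$ and $\chi=x^*$ must converge exponentially, giving $|x(t)-x^*(t)|\le R e^{-\lambda t}|x(0)-x^*(0)|$ for every $(x^*,u^*)\in\Bc^*$, which is the definition of $\Bc^*$-universal stability.

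The only delicate point, rather than a genuine obstacle, is bookkeeping: one must keep the external argument $x$ of $F$ and of $\kappa^{\mathrm{fb}}$ frozen as the actual closed-loop state while comparing the two candidates $\chi\in\{x,x^*\}$ inside the virtual dynamics. The formulation of \textbf{C1} (which demands universal stabilization for \emph{any} trajectory $x$ of \eqref{eq:system}) and of \textbf{C2} (which requires $(x^*,x,\mu^*)$ to be feasible for \emph{any} such $x$) is tailored precisely for this argument, so no additional regularity or well-posedness work is needed beyond the smoothness of $f$ and $F$ already assumed.
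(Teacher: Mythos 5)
Your proposal is correct and follows essentially the same route as the paper's own (sketch) proof: it exhibits $(x,x,u)$ and $(x^*,x,\mu^*)$ as two solutions of the same closed-loop virtual system (the first via the embedding identity $F(x,x,u)=f(x,u)$, the second via Condition \textbf{C2} and $\kappa^{\mathrm{fb}}(\chi^*,\chi^*,x)=0$), then invokes the contraction guaranteed by Condition \textbf{C1} to conclude exponential convergence. The only difference is that you spell out the bookkeeping the paper leaves implicit.
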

\begin{proof}
Note that the above theorem is a special case of \cite[Thm. 2]{wang2020virtual}. Here we give a sketch proof as follows. By Condition {\bf C2} and the behavior embedding principle, the trajectories $(x^*,x,\mu^*)$ and $(x,x,u)$ with $u$ given in \eqref{eq:control-output} are two solutions of the CL virtual system of \eqref{eq:sys-npv} and \eqref{eq:control-fb}. Then, $ x(t) $ converges to $ x^*(t) $ exponentially as the CL system is contracting according to Condition \textbf{C1}.
\end{proof}
An equivalent nonlinear control framework has been proposed in \cite[Section 2.2.3]{Reyes:2019b}. The main difference is the control design method: \cite{Reyes:2019b} relays on exploring the physical structure (port-Hamiltonian) of the nonlinear system while the VCCM approach (see Section~\ref{sec:vccm-design}) focuses on constructive control design methods based on convex optimization. 

The NPV controller \eqref{eq:control-output} contains two parts: a feedback term that achieves universal stability for the NPV system and a feed-forward term that ensures any $ x^* $ from $ \Bc^* $ is admissible to the NPV embedding for all possible $ x $. Note that the universal stability notion used in Condition \textbf{C1} is much stronger than the standard stability concept. Under this strong notion, control synthesis for NPV systems can have a convex formulation similar to the LPV approach, as shown in the next section. The necessity of Condition {\bf C2} for CL stability guarantees will be discussed in Section~\ref{sec:NPV-ff}. 

\subsection{VCCM based control design}\label{sec:vccm-design}

This section presents a constructive approach \cite{wang2020virtual} to the universal stabilization problem in Condition {\bf C1}. For any fixed exogenous signal $ x $ generated by the original NL dynamics \eqref{eq:system}, the virtual system \eqref{eq:sys-npv} becomes a time-varying NL system, we can then apply the CCM-based method \cite{Manchester:2017} to design a universally stabilizing controller. In this approach, one considers a prolonged system consisting of \eqref{eq:sys-npv} and its \emph{differential dynamics}:
\begin{equation}\label{eq:diff-dyn}
	\dot{\delta}_\chi=A(\chi,x,\mu)\delta_\chi+B(\chi,x,\mu)\delta_\mu:=\frac{\partial F(\chi,x,\mu)}{\partial \chi}\delta_\chi+\frac{\partial F(\chi,x,\mu)}{\partial \mu}\delta_\mu,
\end{equation}
defined along solutions $ (\chi,\mu) $. Here $ (\delta_\chi,\delta_\mu) $ represents the infinitesimal variations between $ (\chi,\mu) $ and its neighborhood solutions. Note that we do not include variation on $x$ as it only needs to consider the contraction property of all virtual state trajectories $ \chi $ which are generated under the same exogenous signal $ x $. In this differential setting, many existing tools from the linear system theory (e.g. LMI based control design) can be applied.

A \emph{virtual control contraction metric} (VCCM) $ M(\chi,x) $ is a uniformly bounded matrix function $ M:\X\times\X\rightarrow \R^{n\times n} $ (i.e., there exist some $ a_2\geq a_1>0 $ such that $ a_1I\preceq M(\chi,x)\preceq a_2I $ for all $ \chi,x $) such that the following implication is true for all $ (\chi,x,\mu)\in\X\times\X\times\U $:
\begin{equation}\label{eq:ccm-def}
	\delta_\chi\neq 0,\,\delta_\chi^\top MB=0\;\Rightarrow\;\delta_\chi^\top(\dot{M}+A^\top M+MA+2\lambda M)\delta_\chi<0.
\end{equation}
The existence of a VCCM implies that \eqref{eq:sys-npv} is universally stabilizable \cite{Manchester:2017}. 
Furthermore, we can find a dual metric $ W=M^{-1} $ and a matrix function $ Y(\chi,x)\in\R^{m\times n} $ satisfying
\begin{equation}\label{eq:ccm-lmi}
	-\dot{W}+AW+WA^\top+BY+Y^\top B^\top+2\lambda W\preceq 0
\end{equation}
for all $ (\chi,x,\mu)\in\X\times\X\times\U $. An alternative way to find a CCM is to solve differential Riccati equations \cite{Schaft:2015}. Note that Condition~\eqref{eq:ccm-lmi} convex, but infinite dimensional, as the decision variables $ M,Y $ are smooth matrix functions. Finite-dimensional approximations include LPV synthesis techniques \cite{Toth2010SpringerBook} or sum-of-squares relaxation \cite{Parrilo:2003}. The pointwise LMI \eqref{eq:ccm-lmi} yields a differential state-feedback controller
\begin{equation}\label{eq:diff-control}
	\delta_\mu=K(\chi,x)\delta_\chi:=Y(\chi,x)W^{-1}(\chi,x)\delta_\chi.
\end{equation}

The realization task is to construct a controller satisfying Condition \textbf{C1} from the differential gain $K$. One solution is the path integral based realization \cite{Manchester:2017} of the local LPV controller \eqref{eq:diff-control}:
\begin{equation}\label{eq:control}
	\mu=\mu^*+\underset{\kappa^{\mathrm{fb}}(\chi,\chi^*,x)}{\underbrace{\int_{0}^{1}K(\gamma(s),x)\gamma_s(s)ds}}
\end{equation}	
where $ \gamma $ is a geodesic connecting $\chi^*$ to $\chi$ with respect to the metric $M(\chi,x)=W^{-1}(\chi,x)$. The above realization satisfies $ \kappa^{\mathrm{fb}}(\chi^*,\chi^*,x)=0,\ \forall \chi^*,x\in\mathbb{X} $ and $ \mu_s=K(\gamma(s),x)\gamma_s $, that is, it is an exact realization of \eqref{eq:diff-control} with $ \delta_\mu=\mu_s $ and $ \delta_\chi=\gamma_s $ along the path $ \gamma $. 

The realization \eqref{eq:control} has an LPV interpretation as follows. First, we take sufficiently many sample points of the path $ \gamma $ (i.e. $ 0=s_0<s_1<\cdots<s_N=1 $) such that $ \gamma(s_{i+1})-\gamma(s_i)\approx\gamma_s(s_i)\Delta_{s_i} $ where $ \Delta_{s_i}=s_{i+1}-s_i $, as shown in Figure~\ref{fig:path-integral}. Then, we can define a control sequence for those points by
\begin{equation}\label{eq:control-discrete}
	\nu(s_{i+1})= \nu(s_i)+K(\gamma(s_i),x)\gamma_s(s_i)\Delta_{s_i}=\nu(s_i)+K(\gamma(s_i),x) (\gamma(s_{i+1})-\gamma(s_i))
\end{equation}
with $ \nu(s_0)=\mu^* $. This can be understood as a series of local LPV controllers where each control action $ \nu(s_{i+1}) $ tries to make $ \gamma(s_{i+1}) $ exponentially converge to $ \gamma(s_i) $. Thus, it also makes $ \gamma(s_{i+1}) $ exponentially converge to the reference point $ \chi^*=\gamma(s_0) $. When the number of intermediate states approaches infinity (i.e. $\Delta_{s_i}\rightarrow 0$), the sequence \eqref{eq:control-discrete} becomes a smooth control path $ \nu:[0,1]\rightarrow\U $ defined as the path integral of the local LPV controller \eqref{eq:diff-control}:
\begin{equation}\label{eq:control-path}
	\nu(s):=\mu^*+\int_{0}^{s}K(\gamma(\mathfrak{s}),x)\gamma_s(\mathfrak{s})d\mathfrak{s}.
\end{equation}
The controller \eqref{eq:control} is the end point of this path, i.e. $ \mu=\nu(1) $. If $ M,K $ are independent of $ \chi $ and $ \mu $ respectively, we can obtain an explicit controller of the form
\begin{equation}\label{eq:realization-simple}
	\mu=\mu^*+\left[\int_{0}^{1}K(\widetilde{\chi}(s),x) ds\right] (\chi-\chi^*),
\end{equation}
where $ \widetilde{\chi}(s)=\chi^*+s(\chi-\chi^*) $. For the general case where $M$ is $\chi$-dependent, the controller \eqref{eq:control} usually requires solving an online optimization problem to construct a geodesic. For fast-sampling applications, there exist some real-time approximation methods, e.g. pseudo-spectral method \cite{Leung:2017} and gradient flows \cite{wang:2019}.

\begin{figure}[!bt]
	\centering
	\includegraphics[width=0.8\linewidth]{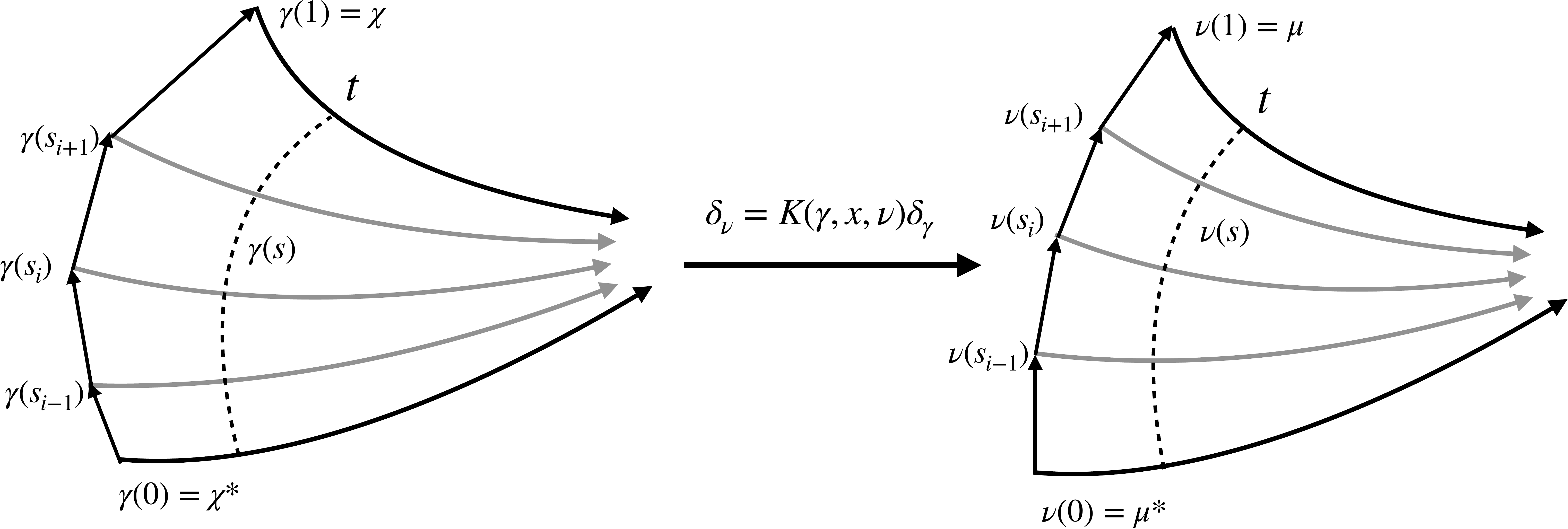}
	\caption{An LPV interpretation of the path integral based realization.}\label{fig:path-integral}
\end{figure}

\subsection{Performance design}

We will also consider the performance design for the $ \Bc^* $-specified tracking problem under load disturbance such as additive friction offsets, imbalance etc. As shown in Figure~\ref{fig:aug-sys}, the performance outputs of state error and control effort are defined by $z_1=W_1(x-x^*)$ and $z_2=W_2(u-u^*)$, respectively, where $W_1,W_2$ are stable linear weighting filters. With minor abuse of notation, we use $ x $ to refer to the state of the augmented system consisting of $ G,W_1$ and $W_2 $. 
\begin{figure}[!bt]
	\centering
	\includegraphics[width=0.55\linewidth]{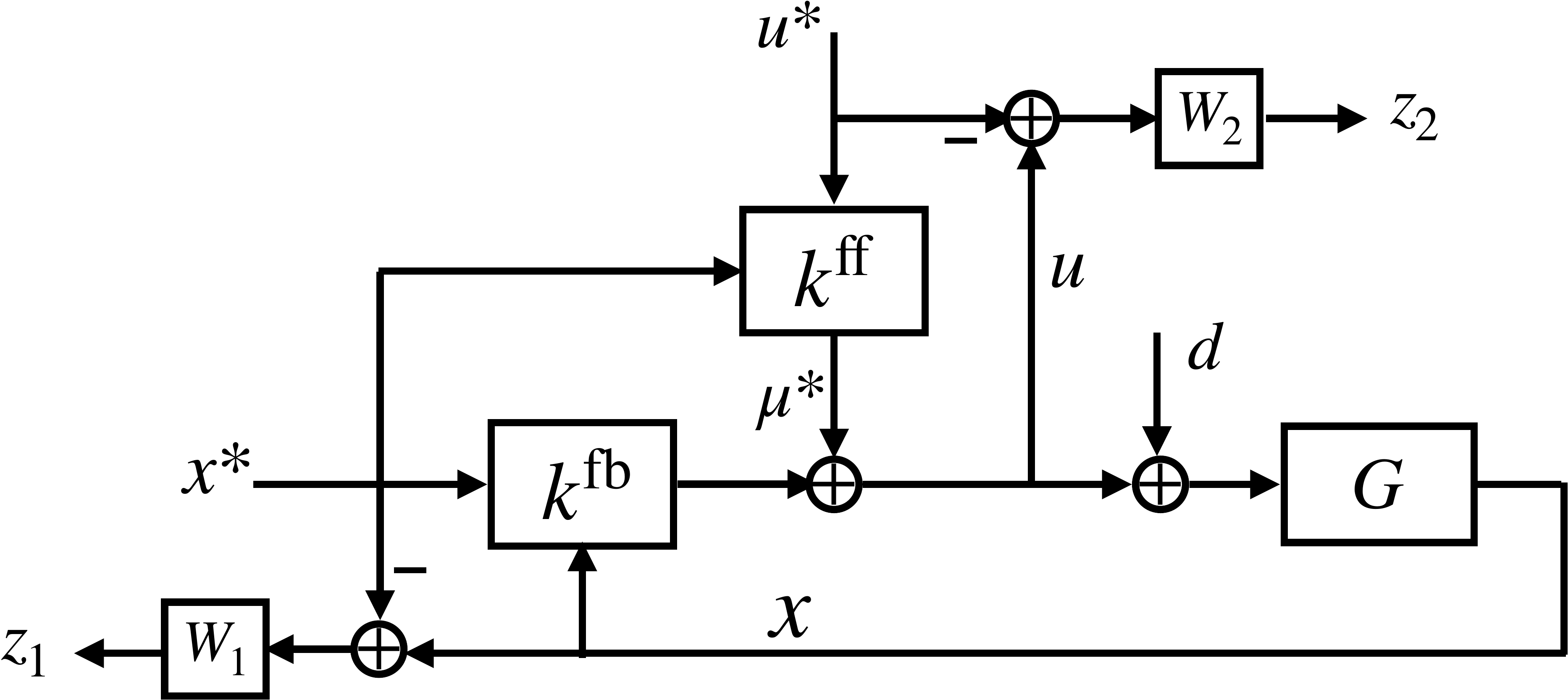}
	\caption{Diagram for the NPV based performance design. }
	\label{fig:aug-sys}
\end{figure}
The augmented system can be represented by the following general form
\begin{equation}\label{eq:sys-aug}
	\dot{x}=f(x,u,d),\quad z=h(x,u,d),
\end{equation}
where $z=(z_1,z_2)$. Applying the controller \eqref{eq:controller} to \eqref{eq:sys-aug} yields the CL system
\begin{equation}\label{eq:sys-aug-cl}
	\dot{x}=f(x,\kappa(x,x^*,u^*),d),\quad z=h(x,\kappa(x,x^*,u^*),d).
\end{equation}

\begin{definition}
The CL system \eqref{eq:sys-aug-cl} is said to achieve $ \Lc_2 $-gain bound of $ \alpha $ at $ (x^*,u^*,d^*,z^*) $ if for all $ T>0 $
\begin{equation}\label{eq:B-L2-gain}
	\|z-z^*\|_T^2\leq \alpha^2\|d-d^*\|_T^2+\beta(x(0),x^*(0)),
\end{equation}
for some function $ \beta(x_1,x_2)\geq 0 $ with $ \beta(x,x)=0 $, where $ d^*=0,\,z^*=0 $ are the nominal values of the disturbance and performance output, respectively. The controlled system \eqref{eq:sys-aug-cl} is said to have a $ \Bc^* $-universal $ \Lc_2 $-gain bound of $ \alpha $, if \eqref{eq:B-L2-gain} holds for all reference trajectories $ (x^*,u^*)\in\Bc^* $. If $ \Bc^* $ is the set of all feasible reference trajectories, we simply call \eqref{eq:sys-aug-cl} universal $ \Lc_2 $-gain bounded by $ \alpha $.
\end{definition}
Note that the $ \Bc^* $-universal gain condition \eqref{eq:B-L2-gain} is stronger than the standard $ \Lc_2 $-gain bound, but weaker than the incremental $ \Lc_2 $ gain bound \cite{Manchester:2018}. To extend the NPV approach for the disturbance rejection problem, we first construct a NPV virtual system of the form
\begin{equation}\label{eq:sys-aug-npv}
	\dot{\chi}=F(\chi,x,\mu,d),\quad \zeta=H(\chi,x,\mu,d),
\end{equation}
where $ \chi(t),\mu(t),\zeta(t) $ live in the same spaces as $ x(t),u(t),z(t) $, respectively. From the NPV embedding principle, we have $ F(x,x,u,d)=f(x,u,d) $ and $ H(x,x,u,d)=h(x,u,d) $. 

The associated differential dynamics of \eqref{eq:sys-aug-npv} is
\begin{equation}\label{eq:diff-sys-aug}
	\begin{split}
		\dot{\delta}_{\chi} &=A(\sigma)\delta_{\chi}+B(\sigma)\delta_{\mu}+B_d(\sigma)\delta_{d}, \\
		\delta_{\zeta} &= C(\sigma)\delta_{\chi}+D(\sigma)\delta_{\mu}+D_d(\sigma)\delta_{d},
	\end{split}
\end{equation}
where $ \sigma=(\chi,x,\mu,d)$, $ A=\frac{\partial F}{\partial \chi} $, $ B=\frac{\partial F}{\partial \mu} $, $ B_d=\frac{\partial F}{\partial d} $, $ C=\frac{\partial H}{\partial \chi} $, $ D=\frac{\partial H}{\partial \mu} $ and $ D_d=\frac{\partial H}{\partial d} $. Applying the differential state feedback \eqref{eq:diff-control} to \eqref{eq:diff-sys-aug} gives the CL differential dynamics:
\begin{equation}\label{eq:diff-sys-aug-cl}
	\begin{split}
		\dot{\delta}_{\chi}&=(A+BK)\delta_{\chi}+B_d\delta_{d}, \\
		\delta_{\zeta}&=(C+DK)\delta_{\chi}+D_d\delta_{d}.
	\end{split}
\end{equation}
To establish an $ \Lc_2 $-gain bound for \eqref{eq:diff-sys-aug-cl}, the choice of VCCM is a uniformly bounded metric $  M(\chi,x) $, satisfying
\begin{equation}\label{eq:diff-L2}
	\dot{V}(\chi,x,\delta_\chi)\leq -\frac{1}{\alpha}|\delta_\zeta|^2 +\alpha|\delta_d|^2,
\end{equation}
where $ V(\chi,x,\delta_\chi)=\delta_\chi^\top M(\chi,x)\delta_\chi $ can be interpreted as a differential storage function. Integration of the above dissipation condition along the geodesics gives the universal $ \Lc_2 $-gain bound of $ \alpha $ \cite{Manchester:2018}.

Similar to the case of $ H_\infty $  state-feedback control for linear systems (e.g., \cite{Dullerud:2013}), the condition \eqref{eq:diff-L2} can be converted into the following pointwise LMI:
\begin{equation}\label{eq:rvccm-synthsis}
	\begin{bmatrix}
		\mathcal{W} & B_d &(C W+D Y)^\top \\
		B_d^\top  & -\alpha I &   D_d^\top \\
		(CW+DY) &   D_d & -\alpha I
	\end{bmatrix}\prec 0,
\end{equation}
where $ W=M^{-1} $, $ Y=KW $ and $ \mathcal{W}=- \dot{W}+AW+WA^\top+BY+Y^\top B^\top $. Note that the above formulation is infinite dimensional but convex in $ W $ and $ Y $. The finite-dimensional approximation techniques for \eqref{eq:ccm-lmi} can also be applied here. 

From \cite[Th.~1]{Manchester:2018} and the behavior embedding principle, the CL system \eqref{eq:sys-aug-cl} achieves an $ \Lc_2 $-gain bound of $ \alpha $ from $ d-d^* $ to $ z-\zeta^* $ where $ \zeta^*=H(x^*,x,\kappa^{\mathrm{ff}}(x,x^*,u^*),d^*) $ satisfies $ \zeta^*=z^* $ if $ x=x^* $. Furthermore, if there exists a constant $ \alpha_{1}>0 $ such that $ |\zeta^*-z^*|\leq \alpha_{1}|x-x^*| $, we can obtain the $ \Lc_2 $-gain bound from $ d-d^* $ to $ \zeta^*-z^* $ as $ \alpha_{1}\alpha_{2} $ where $ \alpha_2 $ is the $ \Lc_2 $-gain bound from $ \delta_{d} $ to $ \delta_{\chi} $ of \eqref{eq:diff-sys-aug-cl} with $ K=YW^{-1} $. Then, the performance bound from $ d-d^* $ to $ z-z^* $ is given as follows.

\begin{theorem}[\cite{wang2020virtual}]\label{thm:npv-performance}
Consider the system \eqref{eq:sys-aug} and its NPV embedding \eqref{eq:sys-aug-npv}. Assume that the LMI \eqref{eq:rvccm-synthsis} is feasible and Condition \textbf{C2} holds for \eqref{eq:sys-aug-npv} and the reference set $ \Bc^* $. Then, the controller \eqref{eq:control-output} achieves a $ \Bc^* $-universal $ \Lc_2 $-gain bound of $ \widetilde{\alpha}=\sqrt{\alpha^2+(\alpha_{1}\alpha_{2})^2} $.
\end{theorem}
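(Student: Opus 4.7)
The plan is to mirror the three-step decomposition already alluded to in the paragraph preceding the theorem: introduce the auxiliary output $\zeta^{*}=H(x^{*},x,\kappa^{\mathrm{ff}}(x,x^{*},u^{*}),0)$, write $z-z^{*}=(z-\zeta^{*})+(\zeta^{*}-z^{*})$, and bound the two pieces separately. The first piece is controlled by a differential $\Lc_{2}$-gain argument on the closed-loop virtual dynamics; the second is controlled by a Lipschitz-in-$x$ estimate on $H$ combined with the $d\to x$ gain of the closed-loop system. Conditions \textbf{C1}--\textbf{C2} and the behavior embedding principle do all the heavy lifting of moving between the virtual and original systems.

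First, I would apply a Schur complement to the pointwise LMI \eqref{eq:rvccm-synthsis} to recover the differential dissipation inequality \eqref{eq:diff-L2} for the storage function $V(\chi,x,\delta_{\chi})=\delta_{\chi}^{\top}M(\chi,x)\delta_{\chi}$ along the closed-loop differential dynamics \eqref{eq:diff-sys-aug-cl} with $K=YW^{-1}$. Integrating \eqref{eq:diff-L2} along a geodesic $\gamma$ joining two closed-loop virtual trajectories in $\chi$, using uniform boundedness of $M$ and the argument of \cite[Thm.~1]{Manchester:2018}, yields a universal incremental $\Lc_{2}$-gain bound of $\alpha$ on the closed-loop virtual system \eqref{eq:sys-aug-npv}--\eqref{eq:control}. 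By \textbf{C2} and the behavior embedding principle, both $(x^{*},x,\mu^{*},d^{*})$ and $(x,x,u,d)$ (with $u$ from \eqref{eq:control-output}) are solutions of this closed-loop virtual system, producing outputs $\zeta^{*}$ and $z$ respectively; specialising the universal bound to this pair gives $\|z-\zeta^{*}\|_{T}^{2}\leq \alpha^{2}\|d-d^{*}\|_{T}^{2}+\beta_{1}(x(0),x^{*}(0))$.

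For the second piece, the hypothesis $|\zeta^{*}-z^{*}|\leq \alpha_{1}|x-x^{*}|$ (noting $\zeta^{*}=z^{*}$ whenever $x=x^{*}$, which is built into the embedding) reduces the task to bounding $\|x-x^{*}\|_{T}$ in terms of $\|d-d^{*}\|_{T}$. This bound follows by isolating the state-only part of \eqref{eq:diff-sys-aug-cl}: another application of the dissipation inequality obtained from \eqref{eq:rvccm-synthsis} (dropping the $\delta_{\zeta}$ block and reading off the $\delta_{d}\to\delta_{\chi}$ gain) gives $\alpha_{2}$ as the differential $\Lc_{2}$-gain from $\delta_{d}$ to $\delta_{\chi}$, which lifts via the same geodesic integration to $\|x-x^{*}\|_{T}\leq \alpha_{2}\|d-d^{*}\|_{T}$ modulo an initial-condition term. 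Combining the two inequalities then yields $\|\zeta^{*}-z^{*}\|_{T}^{2}\leq (\alpha_{1}\alpha_{2})^{2}\|d-d^{*}\|_{T}^{2}+\beta_{2}(x(0),x^{*}(0))$.

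The remaining step is to combine the two bounds into the single $\widetilde{\alpha}=\sqrt{\alpha^{2}+(\alpha_{1}\alpha_{2})^{2}}$. This is the delicate point: the naive triangle inequality $\|z-z^{*}\|_{T}\leq \|z-\zeta^{*}\|_{T}+\|\zeta^{*}-z^{*}\|_{T}$ only delivers the slacker constant $\alpha+\alpha_{1}\alpha_{2}$. To obtain the stated Pythagorean bound, I would work at the differential level before integrating: augment \eqref{eq:diff-sys-aug-cl} with the output $\delta_{\zeta^{*}-z^{*}}$ driven through the $\delta_{d}\to\delta_{\chi}$ channel with gain $\alpha_{1}\alpha_{2}$, stack it orthogonally with $\delta_{z-\zeta^{*}}$, and read the joint $\Lc_{2}$ gain off a single block-diagonal dissipation inequality; the quadratic addition of the two gains then falls out of the diagonal structure. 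This orthogonal-decomposition step is the main technical obstacle and is where I would appeal directly to the argument in \cite{wang2020virtual} to avoid reproducing the accounting.
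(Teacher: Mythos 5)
Your decomposition is exactly the one the paper relies on: the paragraph preceding the theorem is the entirety of the paper's own argument (the full proof is deferred to \cite{wang2020virtual}), and it consists of precisely your two bounds --- an $\Lc_2$-gain of $\alpha$ from $d-d^*$ to $z-\zeta^*$ obtained from \cite[Th.~1]{Manchester:2018} together with Condition \textbf{C2} and the behavior embedding principle, and a gain of $\alpha_1\alpha_2$ from $d-d^*$ to $\zeta^*-z^*$ obtained by composing the Lipschitz estimate $|\zeta^*-z^*|\leq\alpha_1|x-x^*|$ with the $\delta_d\to\delta_\chi$ gain $\alpha_2$ of \eqref{eq:diff-sys-aug-cl}. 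Up to the final step you are therefore on the paper's route, and your reading of where $\alpha$, $\alpha_1$ and $\alpha_2$ come from is accurate.

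The genuine gap is the step you yourself flag as the ``delicate point'': passing from the two separate bounds to $\widetilde{\alpha}=\sqrt{\alpha^2+(\alpha_1\alpha_2)^2}$. Your proposed fix --- stacking the two differential error channels ``orthogonally'' and reading a joint gain off a block-diagonal dissipation inequality --- does not deliver the stated constant, because the quantity to be bounded is the norm of the \emph{sum} $(z-\zeta^*)+(\zeta^*-z^*)$, not of the concatenation $\bigl(z-\zeta^*,\;\zeta^*-z^*\bigr)$. The concatenated output does inherit the joint gain $\sqrt{\alpha^2+(\alpha_1\alpha_2)^2}$ essentially for free from the two individual gains, but converting that into a bound on the sum via $\|a+b\|_T^2\leq 2\bigl(\|a\|_T^2+\|b\|_T^2\bigr)$ costs a factor of $2$ inside the square root, while the plain triangle inequality yields only the larger constant $\alpha+\alpha_1\alpha_2$. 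Either way you land strictly above $\widetilde{\alpha}$ unless the two error signals are actually orthogonal in $\Lc_2[0,T]$, which nothing in the construction guarantees. Closing this requires the specific accounting in \cite{wang2020virtual} (a single dissipation inequality for the true output error rather than two gains glued together a posteriori), so as written the only nontrivial arithmetic in the theorem is asserted by citation rather than proved.
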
 

\begin{remark}
When $ d=d^* $, the tracking cost $ J_T(x_0,x_0^*):=\int_{0}^{T}|z(t)|^2dt $ is bounded by
\begin{equation}\label{eq:perf-bound}
	J_T(x_0,x_0^*)\leq J_\infty(x_0,x_0^*)\leq \alpha^2 \mathcal{E}(\gamma),
\end{equation}
where $ \gamma $ is a geodesic joining $ x_0^* $ to $ x_0 $.
\end{remark}

\subsection{Comparison with the LPV embedding approach}\label{sec:NPV-ff}

In LPV based state-feedback control, system \eqref{eq:system} is rewritten into an LPV embedding of the form
\begin{equation}\label{eq:stand-lpv}
	\dot{x}=\widehat{A}(\sigma)x+\widehat{B}(\sigma)u,
\end{equation}
where $\sigma=\phi(x)$ is the scheduling variable such that $\widehat{A}(\phi(x))x+\widehat{B}(\phi(x))u=f(x,u)$. Note that this embedding can be understood as an LPV virtual system as follows:
\begin{equation}\label{eq:sys-lpv}
	\dot{\chi}=A(x)\chi+B(x)\mu
\end{equation}
with $A(x)=\widehat{A}(\phi(x))$ and $B(x)=\widehat{B}(\phi(x))$. Since the virtual system is linear in $\chi$ and $\mu$, we can use the VCCM based synthesis formulation \eqref{eq:ccm-lmi} to construct an LPV controller $\mu=K(x)\chi$ such that the CL system $\dot{\chi}=A_c(x)\chi:=\bigl( A(x)+B(x)K(x)\bigr) \chi$ is exponentially stable with respect to a Lyapunov function $V(\chi)=\chi^\top M(x)\chi$. 

The standard LPV realization for reference tracking takes the form of
\begin{equation}\label{eq:lpv-stand}
	u=u^*+K(x)(x-x^*),
\end{equation}
where $(x^*,u^*)$ is a feasible solution of \eqref{eq:system}. The VCCM approach uses a different realization, denoted as LPV-VCCM controller, which has the form of
\begin{equation}\label{eq:lpv-vccm}
	u=\kappa^{\mathrm{ff}}(x,x^*,u^*)+K(x)(x-x^*),
\end{equation}
where the feed-forward term $\kappa$ satisfies Condition \textbf{C2}, i.e., $\dot{x}^*=A(x)x^*+B(x)\kappa^{\mathrm{ff}}(x,x^*,u^*)$. Note that when $x^*$ is the origin, the standard LPV and LPV-VCCM controllers have the same realization as $u=K(x)x$. For general cases, they are not identical, resulting in different CL behaviors. Stability of the VCCM-LPV approach can be rigorously  guaranteed by Theorem~\ref{thm:npv} while the standard LPV controller may not offer such guarantees \cite{Scorletti:2015,Koelewijn:2019,Koelewijn:2020}. 

Here we give a brief explanation for the loss of stability guarantees of the standard LPV controller, see \cite[Section 5.2]{wang2020virtual} for details. From \eqref{eq:sys-lpv} - \eqref{eq:lpv-stand}, we can derive the error dynamics as follows
\begin{equation*}
	\begin{split}
		\dot{e}&=A(x)x+B(x)[u^*+K(x)(x-x^*)]-A(x^*)x^*-B(x^*)u^* \\
		&=A_c(x)e+[A(x)-A(x^*)]x^*+[B(x)-B(x^*)]u^*=[A_c(x)+\Delta(x,x^*,u^*)]e,
	\end{split}
\end{equation*}
where $e:=x-x^*$ and
\begin{equation}\label{eq:residual}
	\Delta(x,x^*,u^*)(x-x^*)=[A(x)-A(x^*)]x^*+[B(x)-B(x^*)]u^*.
\end{equation}
Note that the term $\Delta$ vanishes when $x^*$ is the origin, otherwise it is generally non-zero. Now we look into the time derivative of the Lyapunov function $V(e)=e^\top M(x)e$:
\begin{equation}\label{eq:perf-loss}
	\dot{V}(e)=e^\top\mathcal{Q}e +e^\top(\Delta^\top M+M\Delta)e,
\end{equation}
where $\mathcal{Q}=\dot{M}+MA_c+A_cM\preceq -2\lambda M$. When $ \Delta $ is sufficiently large, a smaller converge rate or even instability can be observed, see the academic example in \cite[Section 5.2.2]{wang2020virtual}. The above analysis also applies to the performance design where the bound \eqref{eq:perf-bound} may not hold if Condition {\bf C2} is not satisfied. The VCCM approach can provide stability and performance guarantees since the feed-forward term $ \kappa^{\mathrm{ff}} $ satisfying Condition \textbf{C2} also ensures $ \Delta(x,x^*,u^*)\equiv 0 $.

\section{Control Design for CMG}\label{sec:cmgcontroldesign}

\subsection{Fully-actuated operating mode: OM-1}

Since the outer-most gimbal $ \mathrm{A} $ is locked in this mode (i.e., $ (q_4,\dot{q}_4)=0 $ and $ i_4=0 $), we only need to take into account part of the CMG dynamics whose state and input are $ x=(q_2,q_3,\dot{q}_1,\dot{q}_2,\dot{q}_3) $ and $ u=(i_1,i_2,i_3) $, respectively. The fly-wheel angle $q_1$ can be ignored since it does not directly affects the dynamics of other states. Then, the state-space model of OM-1 can be written as follows
\begin{equation}\label{eq:om1-sys}
	\dot{x}=A(x_1,x_2)x+B(x_1)u:=\begin{bmatrix}
	0 & E \\
	0 & \mathcal{H}(x_1)^{-1}(\mathcal{C}(x_1,x_2)+\mathcal{F}_v)
	\end{bmatrix}\begin{bmatrix}
	x_1 \\ x_2
	\end{bmatrix}+\begin{bmatrix}
	0 \\
	\mathcal{H}(x_1)^{-1}\mathcal{K}_m
	\end{bmatrix}u,
\end{equation}
where $ x=(x_1,x_2) $, $ x_1=(q_2,q_3) $, $ x_2=(\dot{q}_1,\dot{q}_2,\dot{q}_3) $ and $ E=\begin{bmatrix}0 & I\end{bmatrix} $. Here $ \mathcal{H},\mathcal{C},\mathcal{F}_v,\mathcal{K}_m $ are constructed by eliminating the 4th row and column of the matrices $ H,C,F_v,K_m $ in \eqref{eq:Lagrange-eq}, respectively. Note that the above dynamics are fully-actuated. For performance design, we consider the following perturbed dynamics:
\begin{equation}
	\begin{split}
		\dot{x}=A(x_1,x_2)x+B(x_1)u+Dd,\quad
		z=\begin{bmatrix}
		W_1(x-x^*) \\ W_2(u-u^*)
		\end{bmatrix},
	\end{split}
\end{equation}
with $ D=\begin{bmatrix}
0 & I
\end{bmatrix}^\top\in\R^{5\times 3} $, where $ d(t)\in\R^3 $ is the input perturbation and the weighting matrices is chosen as $W_1=1,W_2=0.2$.  

Here we only present the details about Lyapunov design as the performance design has the same procedure except solving a different point-wise LMI.

\paragraph{Standard LPV control.}
We consider the following LPV virtual system of \eqref{eq:om1-sys}:
\begin{equation}\label{eq:om1-lpv}
	\dot{\chi}=A(x_1,x_2)\chi+B(x_1)\mu
\end{equation}
where the scheduling variables satisfy $ q_2,q_3\in [-\frac{\pi}{3},\frac{\pi}{3}] $, $ \dot{q}_1\in[30,60] $ and $ \dot{q}_2,\dot{q}_3\in[-1,1] $. For Lyapunov design, the pointwise LMI \eqref{eq:ccm-lmi} is solved by the grid-based method \cite{Wu:1995} with $ \lambda=0.5 $ and constant dual metric $ W $. To be specific, the grid based approach approximates the LPV embedding \eqref{eq:om1-lpv} as a state-space model array defined on a finite grid domain, as shown in Figure~\ref{fig:grid}. For each grid point $x^k$, there is a corresponding LTI system $(A(x^k),B(x^k))$ which describes the dynamics of \eqref{eq:om1-lpv} where the scheduling variable $x$ is held constant. For control of CMG, we use three grid points for each scheduling variable. Then, the pointwise LMI \eqref{eq:ccm-lmi} is approximated by 243 LMIs of grid-dependent matrices $Y^k\in\R^{5\times 3}$ with $k=1,2,\ldots,243$ and a dual metric $W\in\R^{5\times 5}$. The control synthesis problem is solved by YALMIP \cite{Lofberg:2004} with the solver SDPT3 \cite{Toh:1999}, which takes roughly $6.5\mathrm{s}$ on MacBook Pro with Intel Core i5, 8GB memory and Matlab 2020a. 

We determine the LPV control gain $K_{\mathrm{LPV}}(x)$ via linear interpolation of grid-dependent gain $K^k=Y^kW^{-1}$, $1\leq k\leq 243$. The tracking controller for the reference $(x^*,u^*)$ takes the form of
\begin{equation}\label{eq:om1-stand-lpv}
	\text{standard LPV:}\quad u=u^*+K_{\mathrm{LPV}}(x)(x-x^*).
\end{equation}
For the two conditions of Theorem~\ref{thm:npv}, the above realization only satisfies Condition \textbf{C1}.

\begin{figure}[!bt]
	\centering
	\includegraphics[width=0.35\textwidth]{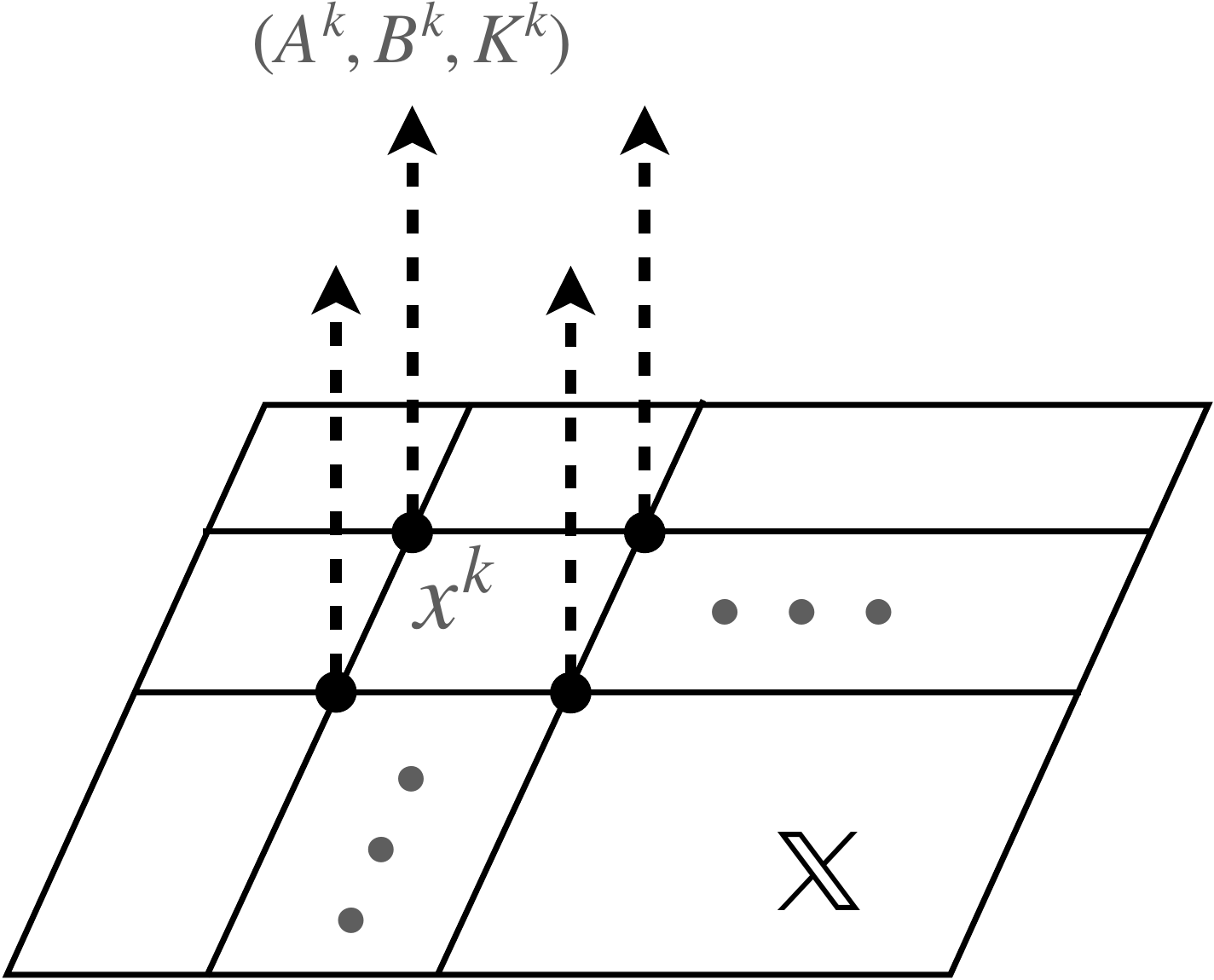} 
	\caption{LPV models and control gains defined on a rectangular grid.}\label{fig:grid}
\end{figure}

\paragraph{LPV-VCCM control.} We use the same design procedures of the standard LPV approach except the control realization. Here we choose the following controller
\begin{equation}\label{eq:om1-lpv-vccm}
	\text{LPV-VCCM:}\quad u=k_{\mathrm{LPV}}^{\mathrm{ff}}(x^*,u^*,x)+K_{\mathrm{LPV}}(x)(x-x^*), 
\end{equation}
where $ k_{\mathrm{LPV}}^{\mathrm{ff}}(x^*,u^*,x):=\mathcal{K}_m^{-1}[\mathcal{H}(x_1)\dot{x}_{2}^*+(\mathcal{C}(x_1,x_2)+\mathcal{F}_v)x_2^*] $. Note that Conditions \textbf{C1} - \textbf{C2} hold for the above LPV-VCCM controller. 

\paragraph{NPV-VCCM control.} We choose the following NPV embedding of \eqref{eq:om1-sys}:
\begin{equation}\label{eq:om1-npv}
	\dot{\chi}=A(x_1,\chi_2)\chi+B(x_1)\mu,
\end{equation}
which explicitly considers the quadratic nonlinearity of $ x_2 $ in the OM-1 dynamics. The nonlinearity of $ x_1 $ is hidden in the external parameter so that the $B$ matrix is independent of $\chi$, allowing a simpler formulation of \eqref{eq:ccm-lmi} as it becomes independent of $\mu$ \cite{Manchester:2017}. 

The associated differential dynamics of \eqref{eq:om1-npv} is
\begin{equation}
	\dot{\delta}_\chi=\mathcal{A}(\sigma)\delta_\chi+B(\sigma)\delta_\mu,
\end{equation}
where 
\[ \mathcal{A}(\sigma)=\begin{bmatrix}
0 & E \\
0 & \mathcal{H}(x_1)^{-1}(2\mathcal{C}(x_1,\chi_2)+\mathcal{F}_v)
\end{bmatrix}\] with $ \sigma=(x_1,\chi_2) $ as the scheduling variable. For control synthesis, the operating range of $\sigma$ is chosen to be the same as the LPV case. We solve \eqref{eq:ccm-lmi} via grid-based approach to obtain the differential control gain $ K_{\mathrm{NPV}}(x_1,\chi_2) $. The control realization takes the form as follows:
\begin{equation}\label{eq:om1-npv-vccm}
	\text{NPV-VCCM:}\quad u=k_{\mathrm{NPV}}^{\mathrm{ff}}(x,x^*,u^*)+\left[\int_{0}^{1}K_{\mathrm{NPV}}(x_1,\chi_2(s))ds\right](x-x^*)
\end{equation}
with $ \chi_2(s)=(1-s)x_2^*+sx_2 $, where the feed-forward term is chosen as
\begin{equation}
	k_\mathrm{NPV}^{\mathrm{ff}}(x,x^*,u^*):=\mathcal{K}_m^{-1}[\mathcal{H}(x_1)\dot{x}_2^*+(\mathcal{C}(x_1,x_2^*)+\mathcal{F}_v)x_2^*].
\end{equation}
Note that the above realization satisfies both Conditions \textbf{C1} and \textbf{C2}. For online computation, the integral in \eqref{eq:om1-npv-vccm} is approximated by
\begin{equation}
	\int_{0}^{1}K_{\mathrm{NPV}}(x_1,\chi_2(s))ds\approx \frac{1}{N}\sum_{i=0}^{N-1}K_{\mathrm{NPV}}(x_1,x_2^*+i(x_2-x_2^*)/N)
\end{equation}
where a large $ N $ can improve the accuracy, but results in online computation delay. Here we found that $ N=10 $ can provide a good accuracy with minor control latency in this case.

\subsection{Under-actuated operating mode: OM-2}\label{sec:cmg-om2}

For the operating mode OM-2, the CMG becomes an under-actuated system as there is no motor torque acting on gimbal $ \mathrm{A} $. The motion of gimbal $ \mathrm{A} $ is then mainly driven by the gyroscopic effect from disk $ \mathrm{D} $ and gimbal $ \mathrm{C} $. Specifically, when disk $ \mathrm{D} $ satisfies $\dot{q}_1>0$, the motion of gimbal $\mathrm{C}$ towards the direction $q_2>0$ will generate a torque to drive frame $ \mathrm{A} $ towards the direction $q_4<0$, and vice versa. Thus, the variables $ q_2 $ and $ q_4 $ cannot be independently controlled, e.g. we cannot move $ (q_2,q_4) $ to the region $ \R_+^2 $. To avoid the difficulties of constructing an embedding satisfying Condition C1 under such constraints, we exclude $q_2$ from the state value and treat it as a scheduling variable. The dynamics of OM-2 can be represented by
\begin{equation}\label{eq:om2-sys}
	\dot{x}=A(q_2,x_2)x+B(q_2)u:=
	\begin{bmatrix}
	0 & E \\
	0 & \mathcal{H}(q_2)^{-1}(\mathcal{C}(q_2,x_2)+\mathcal{F}_v)
	\end{bmatrix}\begin{bmatrix}
	x_1 \\ x_2
	\end{bmatrix}+\begin{bmatrix}
	0 \\
	\mathcal{H}(q_2)^{-1}\mathcal{K}_m
	\end{bmatrix}u,
\end{equation}
where $ x=(x_1,x_2) $ with $ x_1=q_4 $ and $ x_2=(\dot{q}_1,\dot{q}_2,\dot{q}_4) $ is the state, and $ u=(i_1,i_2) $ the control input. It is important to note that the scheduling variable $ q_2 $ is not a free external parameter as it is affected by the internal state $ \dot{q}_2 $ of \eqref{eq:om2-sys}. Here $ \mathcal{H},\mathcal{C},\mathcal{F}_v $ are constructed by eliminating the 3rd row and column of the matrices $ H,C,F_v $ in \eqref{eq:Lagrange-eq}, respectively, and $ \mathcal{K}_m $ is obtained by removing the 3rd row and 3-4th columns of $ K_m $. 

For performance design, we consider the following model:
\begin{equation}
	\begin{split}
		\dot{x}=A(q_2,x_2)x+B(q_2)u+Dd,\quad
		z=\begin{bmatrix}
		W_1(x-x^*) \\ W_2(u-u^*)
		\end{bmatrix},
	\end{split}
\end{equation}
with $ D=[0\;1\; 0\; 0]^\top $, where $ d(t)\in\R $ is input perturbation. The weighting matrices are chosen as $W_1=\diag(5,0.1,1,4)$ and $W_2=\diag(20,10)$, which will be explained later in Section~\ref{sec:result-om2}. Similar to control design for OM-1, the rest of this section focuses on the Lyapunov design.

\paragraph{Standard LPV control.} Here we consider the following LPV embedding of \eqref{eq:om2-sys}:
\begin{gather}\label{eq:om2-lpv} 
	\dot{\chi}=A(q_2,x_2)\chi+B(q_2)\mu, 
\end{gather}
where the scheduling variable $ (q_2,x_2) $ is chosen to be within the range of $ q_2\in [-\frac{\pi}{3},\frac{\pi}{3}] $, $ \dot{q}_1\in[30,60] $ and $ \dot{q}_2,\dot{q}_4\in[-1,1] $. We use the grid-based method to solve the pointwise LMI \eqref{eq:ccm-lmi} with $ \lambda=0.5 $ and constant dual metric $ W $. The control realization is similar to \eqref{eq:om1-stand-lpv}. 

\paragraph{LPV-VCCM control.} We use the same LPV embedding model and synthesis result from the standard LPV control design for OM-2. The corresponding LPV-VCCM controller can be written in the form of \eqref{eq:om1-lpv-vccm} but with a different feed-forward term
\begin{equation}\label{eq:om2-lpv-ff}
	k_{\mathrm{LPV}}^{\mathrm{ff}}(x,x^*,u^*):=\mathcal{K}_m^{\dagger}[\mathcal{C}(q_2,x_2)+\mathcal{F}_v]x_2^*,
\end{equation}
where $ \mathcal{K}_m^{\dagger}=(\mathcal{K}_m^\top \mathcal{K}_m)^{-1}\mathcal{K}_m^\top $ denotes the general inverse. Condition \textbf{C2} does not hold for the virtual system \eqref{eq:om2-lpv} as the term $\mathcal{K}_m\mu$ lives in a 2-dimensional space due to underactuation  while the vector $ \mathcal{C}(q_2,x_2)x_2^*$ is a full 3-dimensional vector when $ \dot{q}_2,\dot{q_4}$ are non-zeros.
The choice in \eqref{eq:om2-lpv-ff} minimizes the residual term \eqref{eq:residual} so that CL performance loss as analyzed in Section~\ref{sec:NPV-ff} is reduced.

\paragraph{NPV-VCCM control.} We choose the trivial NPV embedding (i.e. the true system itself) of \eqref{eq:om2-sys}:
\begin{equation}\label{eq:om2-npv}
	\dot{\chi}=A(q_2,\chi_2)\chi+B(q_2)\mu. 
\end{equation}
The reason for such choice is that Condition \textbf{C2} can be easily satisfied by simply using the feed-forward term $\mu^*=u^*$. The NPV-VCCM control realization can be expressed as
\begin{equation}\label{eq:om2-npv-controller}
	u=u^*+\left[\int_{0}^{1}K_{\mathrm{NPV}}(q_2,\chi_2(s))ds\right](x-x^*),
\end{equation}
where $ \chi_2(s)=(1-s)x_2^*+sx_2 $. Here the control gain $K_{\mathrm{NPV}}(q_2,\chi_2)$ is obtained by solving \eqref{eq:ccm-lmi} with grid-based method subject to the same operation range as the standard LPV approach. 

\section{Discussions on simulation and experimental results}\label{sec:result}

\subsection{Operating mode OM-1}\label{sec:result-om1}

We first compare the standard LPV, LPV-VCCM, and NPV-VCCM controllers obtained from Lyapunov design. The control tasks include tracking of set-points and a dynamic reference. The simulation results are depicted in Figure~\ref{fig:om1-response} where the LPV-VCCM and NPV-VCCM controllers have similar CL convergence rate as they both satisfy Conditions \textbf{C1} and \textbf{C2}. The standard LPV controller has similar convergence rate for set-point tracking but fails to track the dynamic reference. As analyzed in Section~\ref{sec:NPV-ff}, this is mainly due to the violation of Condition \textbf{C2} for the standard LPV controller, which yields a residual term for the error dynamics as follows
\begin{equation}\label{eq:om1-residual}
	\Delta(x,x^*)(x-x^*)=[\mathcal{H}(x_1)-\mathcal{H}(x_1^*)]\dot{x}_2^*+[\mathcal{C}(x_1,x_2)-\mathcal{C}(x_1^*,x_2^*)]x_2^*.
\end{equation} 
Note that $ \Delta $ is relatively small for set-points as $ \dot{x}_2^*=0 $.  The standard LPV controller can still have comparable performance to other controllers. However, it fails to converge to dynamic references as $\Delta$ increases significantly with non-zero $\dot{x}_2^*$.

\begin{figure}[!bt]
	\centering
	\begin{tabular}{cc}
		\includegraphics[width=0.47\textwidth]{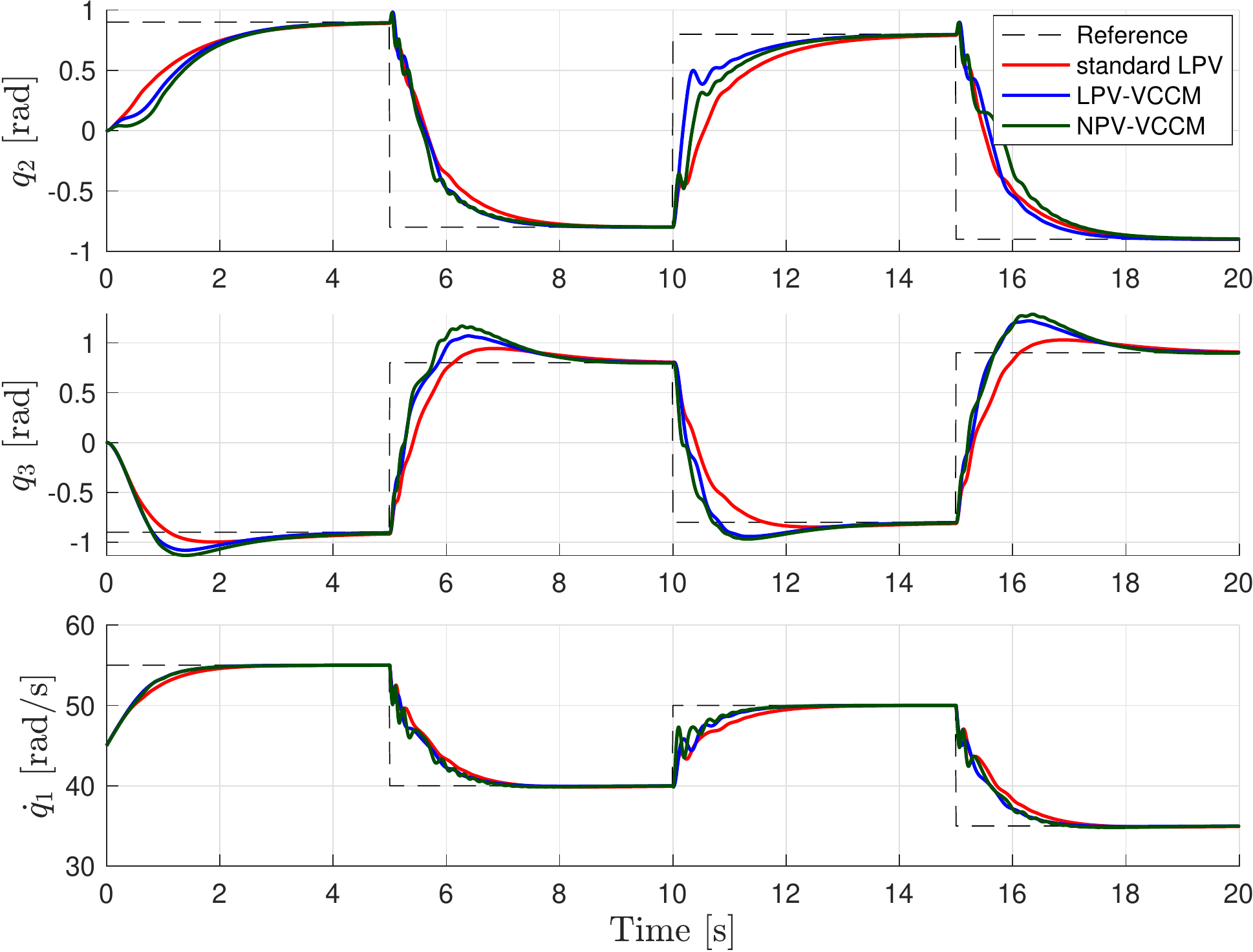} &
		\includegraphics[width=0.47\textwidth]{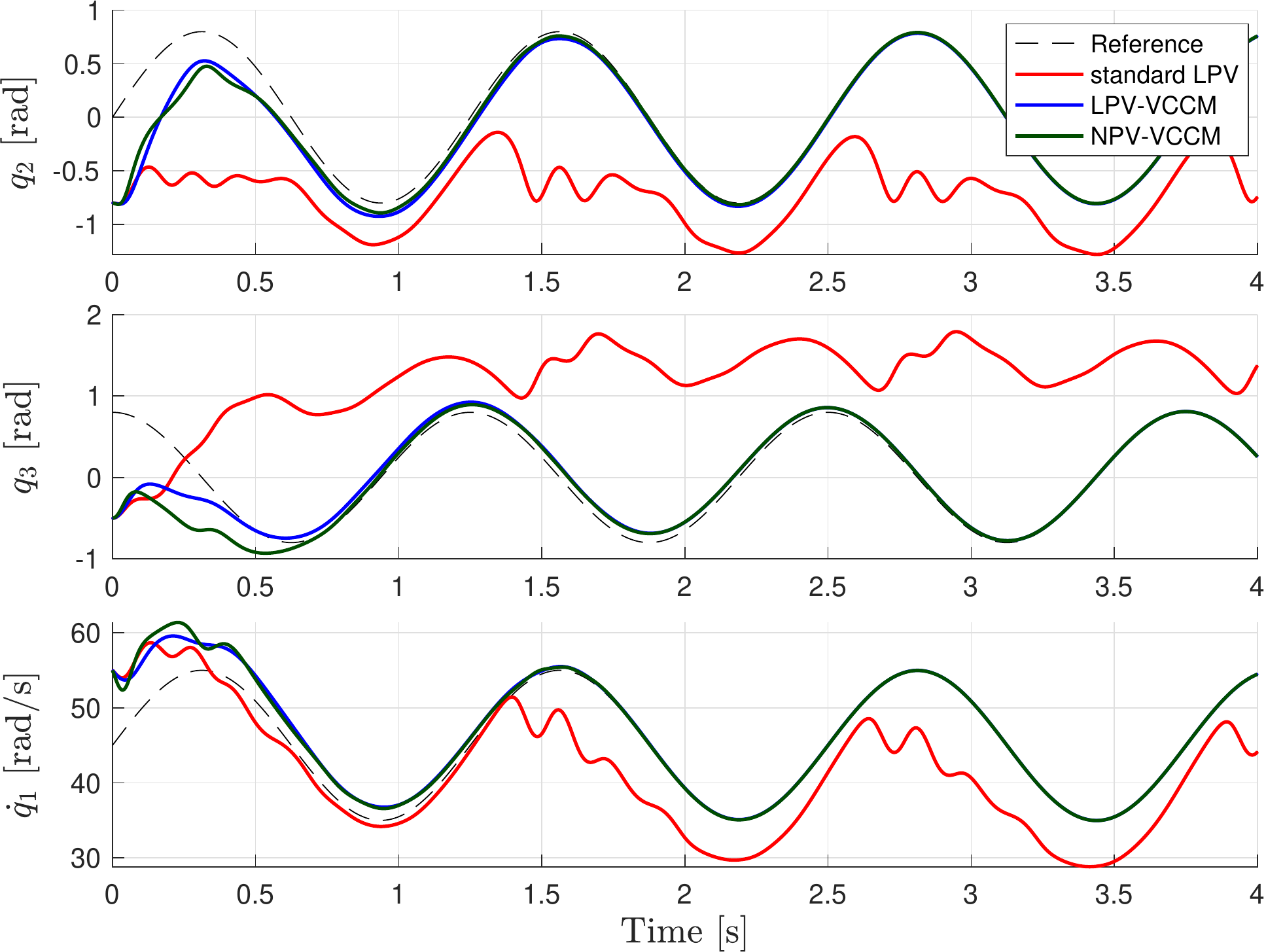} \\
		{\scriptsize(a) Set-point tracking} & {\scriptsize(b) Reference tracking}
	\end{tabular}
	\caption{OM-1 simulation results of different tracking tasks with controllers obtained from \eqref{eq:ccm-lmi}.}\label{fig:om1-response}
\end{figure}

We also compare the controllers from performance design for tracking of dynamic references. In the simulation, neither exogenous input disturbance nor model uncertainty is considered. Figure~\ref{fig:om1-performance}(a) depicts the response for a periodic reference with frequency of 0.8Hz under large initial error. The experimental test contains both input disturbance (i.e. friction) and various type of model uncertainties (e.g., unmodeled velocity filter and input saturation). The response for a periodic reference with frequency of 0.2Hz and small initial error is shown in Figure~\ref{fig:om1-performance}(b). Both the simulation and experimental results reveal that the LPV-VCCM and NPV-VCCM controllers have similar tracking performance while the standard LPV controller can fail to converge to the reference trajectory. This also can be seen from the performance comparison in Table~\ref{tab:om1-perform} where the standard LPV controller yields a much larger $J_T$ than LPV-VCCM and NPV-VCCM. Note that for the standard LPV control design better performance can possibly be obtained using different weights and/or a different controller structure. However, the guarantees of converging towards the reference trajectory are always absent, while for the VCCM based designs one \emph{does} have these guarantees.

\begin{table}[!bt]
	\centering
	\caption{Control performance comparison in OM-1: simulation - fast-varying reference and large initial error; experiment - slow-varying reference and small initial error.}\label{tab:om1-perform}
	\begin{tabular}{|c|c|c|c|c|}
		\hline
		& & & & \\[-2ex]
		Embedding & Gain bound $ \alpha $ & Realization & $ J_{T=4} $  (simulation) & $ J_{T=20} $  (experiment)\\  \hline 
		& & & & \\[-2ex]
		\multirow{2}{*}{LPV} & \multirow{2}{*}{0.4585}  & standard LPV & 446.3 & 32.8 \\ \cline{3-5}
		& & & & \\[-2ex]
		& & LPV-VCCM & 19.9 & 19.5 \\ \hline
		& & & & \\[-2ex]
		NPV & 0.4711 & NPV-VCCM & 19.3 & 7.7\\ \hline
	\end{tabular}
\end{table}

\begin{figure}[!bt]
	\centering
	\begin{tabular}{cc}
		\includegraphics[width=0.47\textwidth]{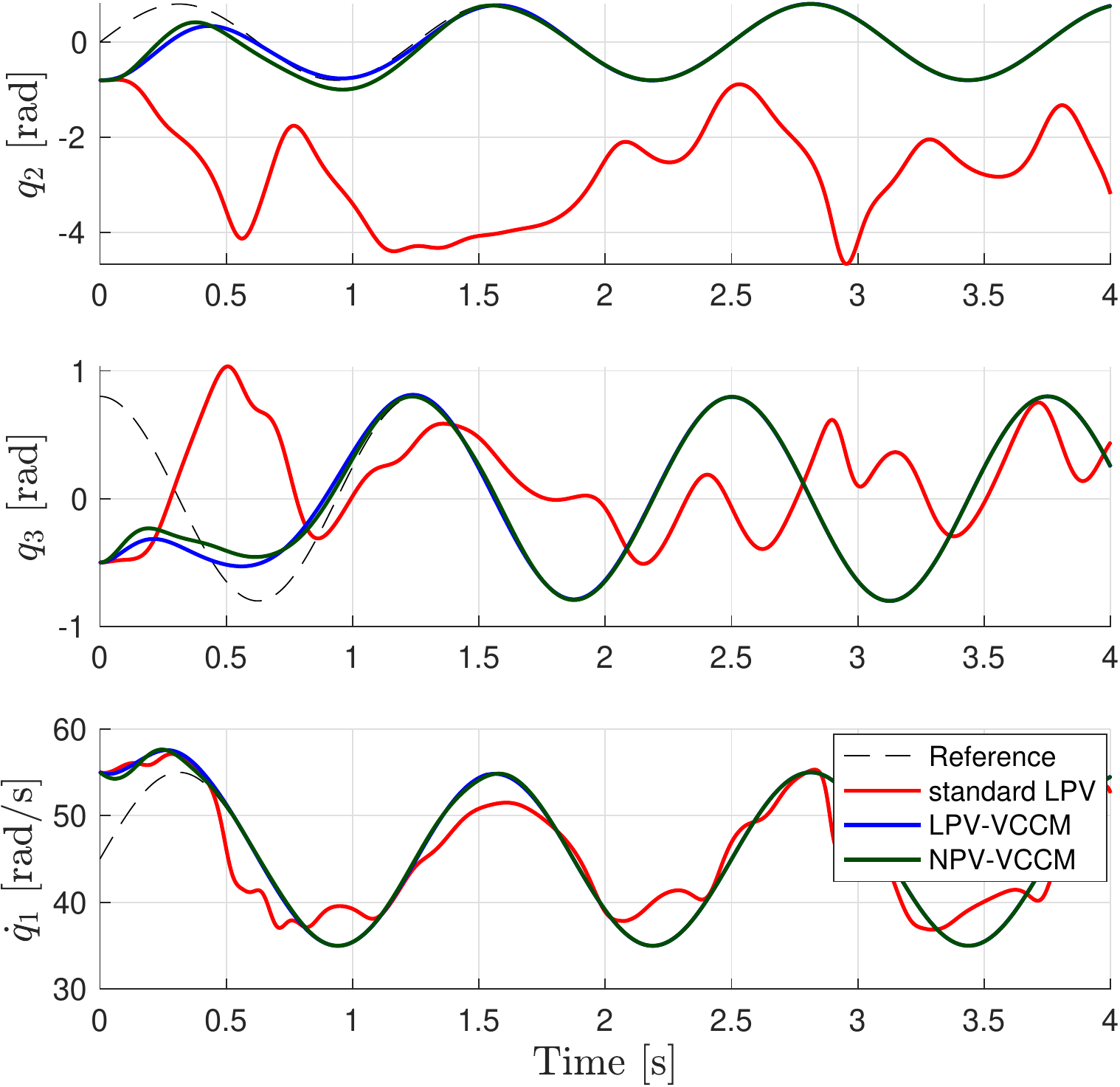} &
		\includegraphics[width=0.47\textwidth]{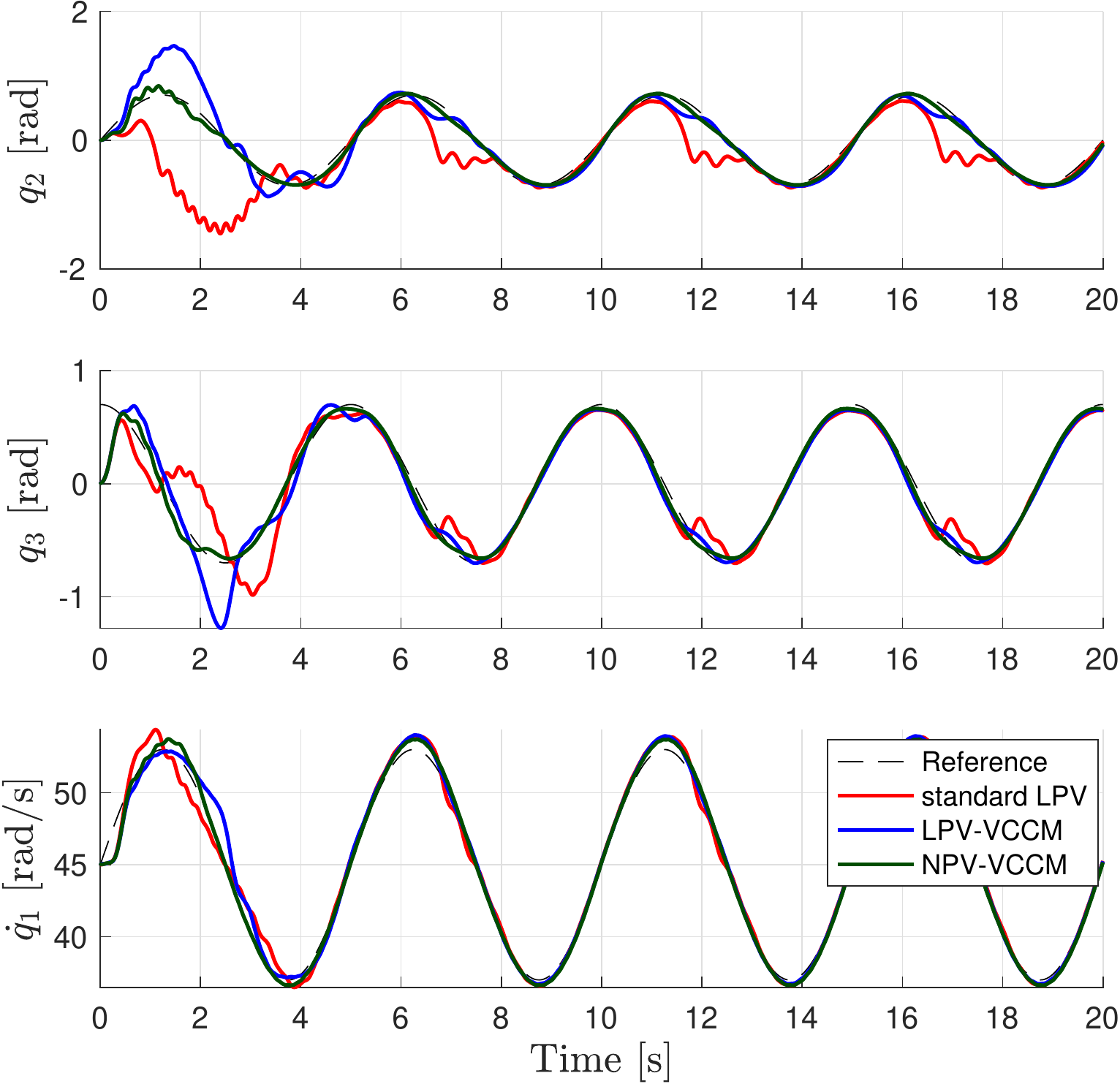} \\
		{\scriptsize (a) Simulation } & {\scriptsize (b) Experiment}
	\end{tabular}
	\caption{Comparison of controller obtained from \eqref{eq:rvccm-synthsis} for OM-1: simulation - fast-varying reference and large initial error; experiment - slow-varying reference and small initial error.}\label{fig:om1-performance}
\end{figure}

\subsection{Operating mode OM-2}\label{sec:result-om2}

We first simulate the CL responses of the standard LPV, LPV-VCCM and NPV-VCCM controllers obtained from Lyapunov design. Both small and large set-points are considered for gimbal $\mathrm{A}$, i.e., $ |q_4^*|= 0.36\pi $ and $ |q_4^*|=0.9\pi $. As shown in Figure~\ref{fig:om2-response}(a) where $|q_4^*|$ is small, due to the violation of Condition \textbf{C2}, the convergence speed of the standard LPV and LPV-VCCM controllers is slower than the NPV-VCCM approach. The performance deterioration of LPV-VCCM is less severe due to the specific choice of the feed-forward input \eqref{eq:om2-lpv-ff} where the residual term $ \Delta $ in \eqref{eq:residual} is minimized.

For moderate set-points $ |q_4^*|=0.45\pi $, the experimental result in Figure~\ref{fig:om2-response-exp} reveals a significant performance loss for standard LPV controller compared with the VCCM approach. This is due to the small stability margin of the standard LPV approach and the large uncertainties presented in the experimental setting (i.e., friction force, unmodeled velocity filter and input saturation).

When $|q_4^*|$ further increases, unstable CL behaviors are observed for both LPV-VCCM and NPV-VCCM controllers in simulation, as shown in Figure~\ref{fig:om2-response}(b). The main cause is that the variable $ q_2 $ exceeds the operation range for a large $|q_4^*|$. Those two controllers fail to keep $q_2$ within its operation range because the Lyapunov design does not take $ q_2 $ into account. Moreover, since the variables $q_2$ and $q_4$ are correlated, the LPV-VCCM and NPV-VCCM controllers give fast responses to $ q_4 $ by pushing $ q_2 $ towards the operation boundary, as shown in Figure~\ref{fig:om2-response}(a). 

\begin{figure}[!bt]
	\centering
	\begin{tabular}{cc}
		\includegraphics[width=0.47\textwidth]{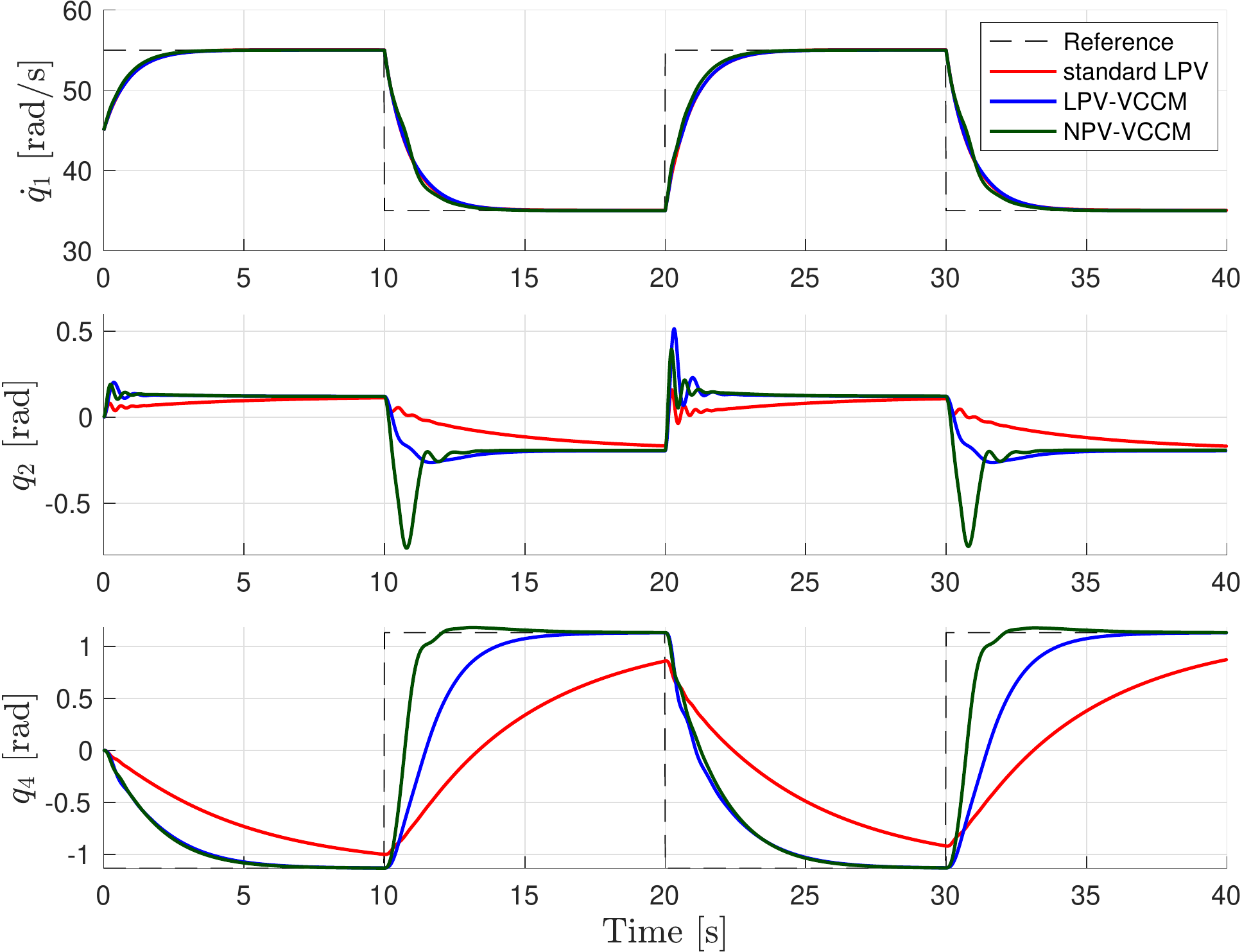} &
		\includegraphics[width=0.47\textwidth]{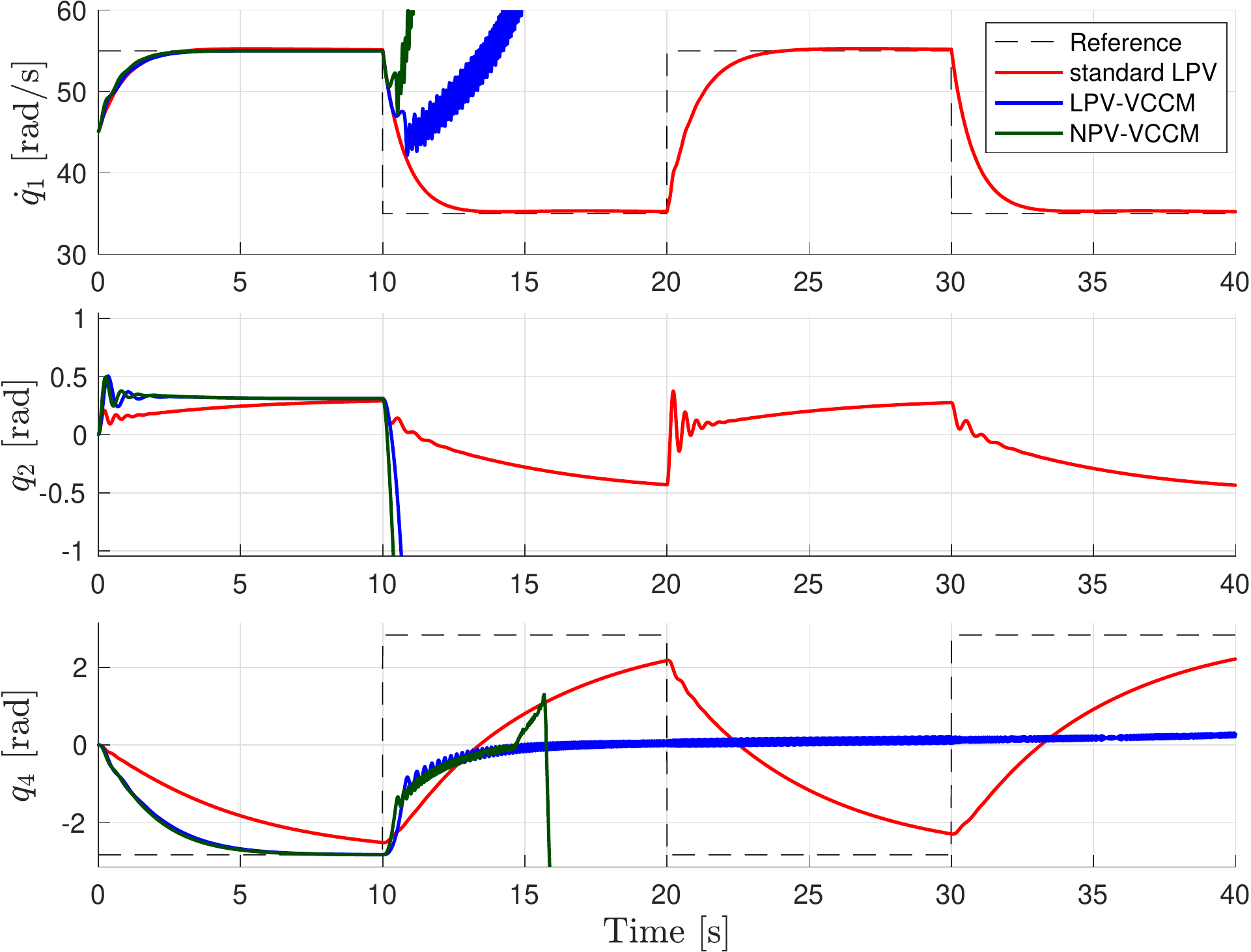} \\
		{\scriptsize(a) $ q_4^*=\pm0.36\pi $ } & {\scriptsize(b) $ q_4^*=\pm0.9\pi $}
	\end{tabular}
	\caption{OM-2 simulation results of different set-points with controllers obtained from \eqref{eq:ccm-lmi}.}\label{fig:om2-response}
\end{figure}

\begin{figure}[!bt]
	\centering
	\includegraphics[width=0.48\textwidth]{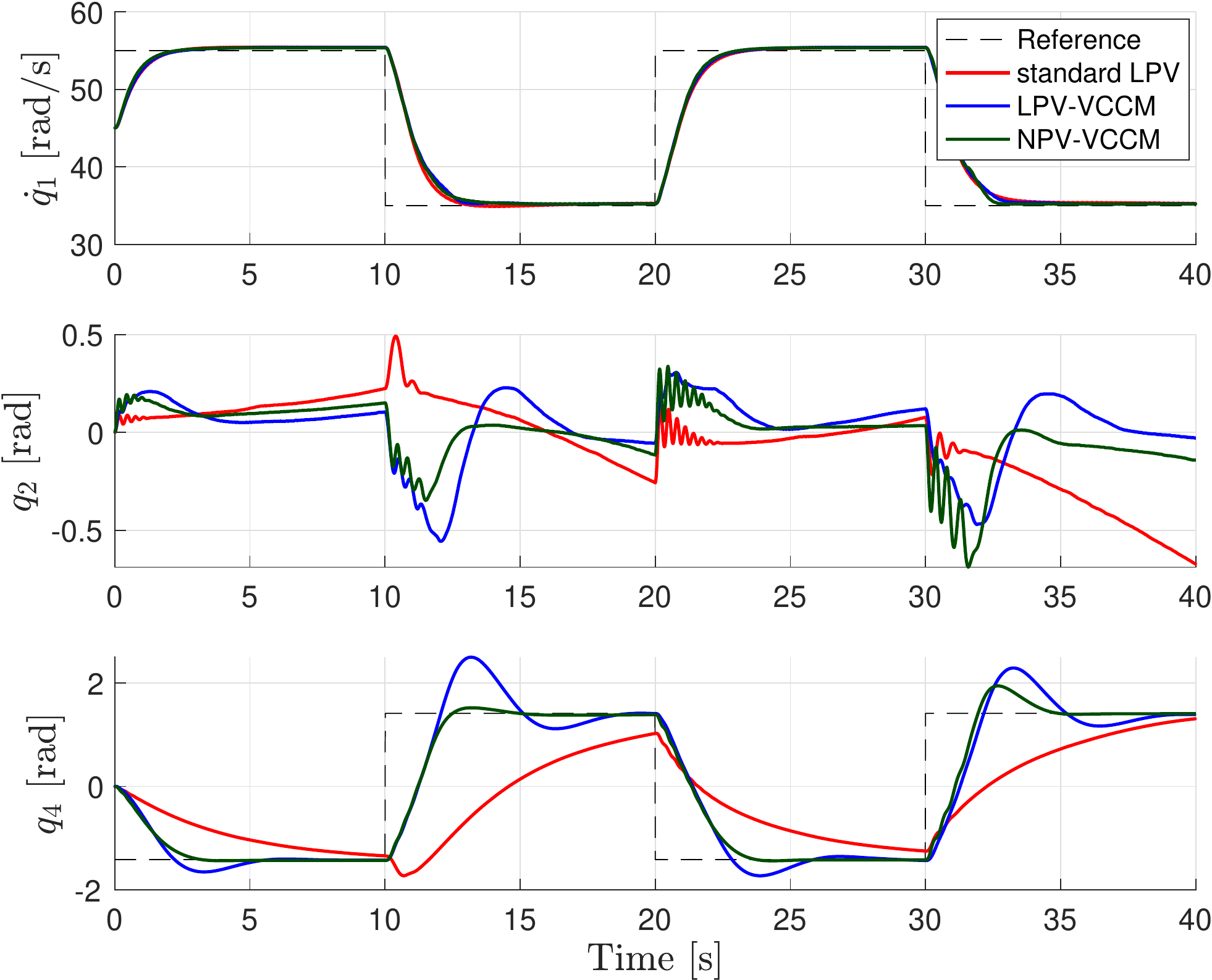} 
	\caption{OM-2 experimental result of controllers obtained from \eqref{eq:ccm-lmi} for moderate set-points.}\label{fig:om2-response-exp}
\end{figure}

Although the correlation between $q_2$ and $q_4$ causes stability issues, it also offers us a solution to address these issues via performance design. By choosing large weighting coefficients on $q_4$ and $\dot{q}_4$, it can help to keep $q_2$ within its operation range. We observe acceptable simulation and experimental responses for the choice of $W_1=\diag(5,0.1,1,4)$ and $W_2=\diag(20,10)$. Large $W_2$ is used to cope with the uncertainty from input saturation.

The synthesis results (Table~\ref{tab:om2-perform}) show that the universal $ \Lc_2 $-gain bounds for the LPV and NPV embedding are very close. Figure~\ref{fig:om2-performance}(a) shows that the standard LPV controller slightly outperforms the other two controllers in simulation (see also in Table~\ref{tab:om2-perform} where the standard LPV controller gives a smaller $J_T$). However, as shown in Figure~\ref{fig:om2-performance}(b), it leads to CL instability in the experiment where large uncertainties are presented. The loss of robustness is mainly due to the residual term in the Lyapunov analysis \eqref{eq:perf-loss} for set-point tracking. This term is caused by violation of Condition \textbf{C2} as analyzed in Section~\ref{sec:NPV-ff}. Although the LPV-VCCM controller also violates this condition, it is more robust than the stand LPV controller as it uses a feed-forward term \eqref{eq:om2-lpv-ff} that minimizes the residual term. For the NPV-VCCM controller which uses the same feed-forward term as the standard LPV approach, the difference is that it satisfies Condition \textbf{C2} due to the choice of the particular NPV embedding \eqref{eq:om2-npv}. Then, its robust stability and performance can be guaranteed by Theorem~\ref{thm:npv} and \eqref{thm:npv-performance} if $ q_2 $ is kept within its operation range.

\begin{table}[!bt]
	\centering
	\caption{Performance comparison with different embedding models and realizations for OM-2.}\label{tab:om2-perform}
	\begin{tabular}{|c|c|c|c|c|}
		\hline
		& & & &\\[-2ex]
		Embedding & Gain bound $ \alpha $ & Realization & $ J_{T=40} $ (simulation) & $ J_{T=40} $ (experiment) \\ \hline 
		& & & & \\[-2ex]
		\multirow{2}{*}{LPV} & \multirow{2}{*}{1.2035}  & standard LPV & \num{2.1314e+4} & unstable\\ \cline{3-5}
		& & & & \\[-2ex]
		& & LPV-VCCM & \num{2.6816e4} & \num{1.0901e5}\\ \hline
		& & &  &\\[-2ex]
		NPV & 1.2094 & NPV-VCCM & \num{2.8506e4} & \num{1.2753e5} \\ \hline
	\end{tabular}
\end{table}

\begin{figure}[!bt]
	\centering
	\begin{tabular}{cc}
		\includegraphics[width=0.47\textwidth]{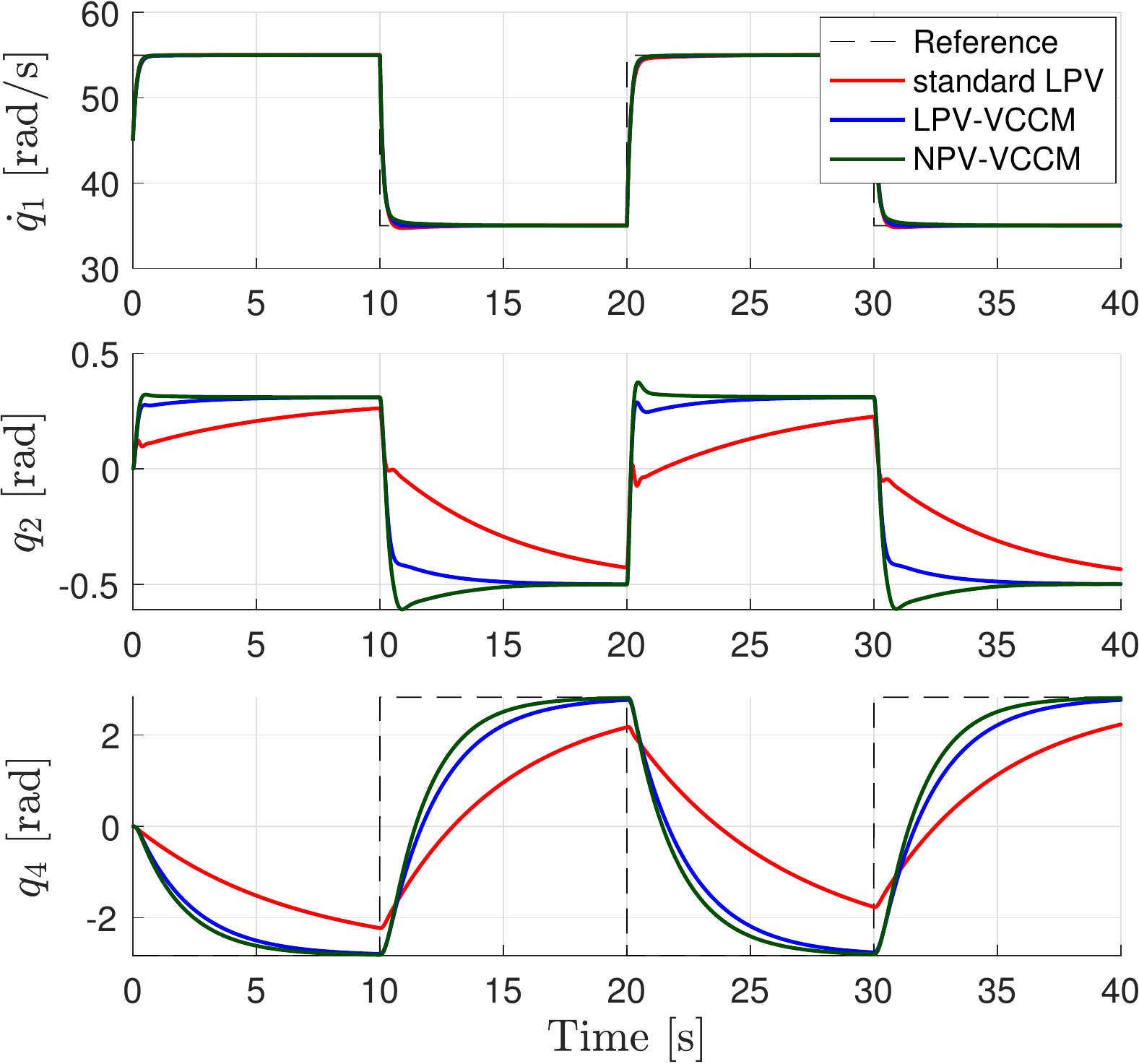} &
		\includegraphics[width=0.47\textwidth]{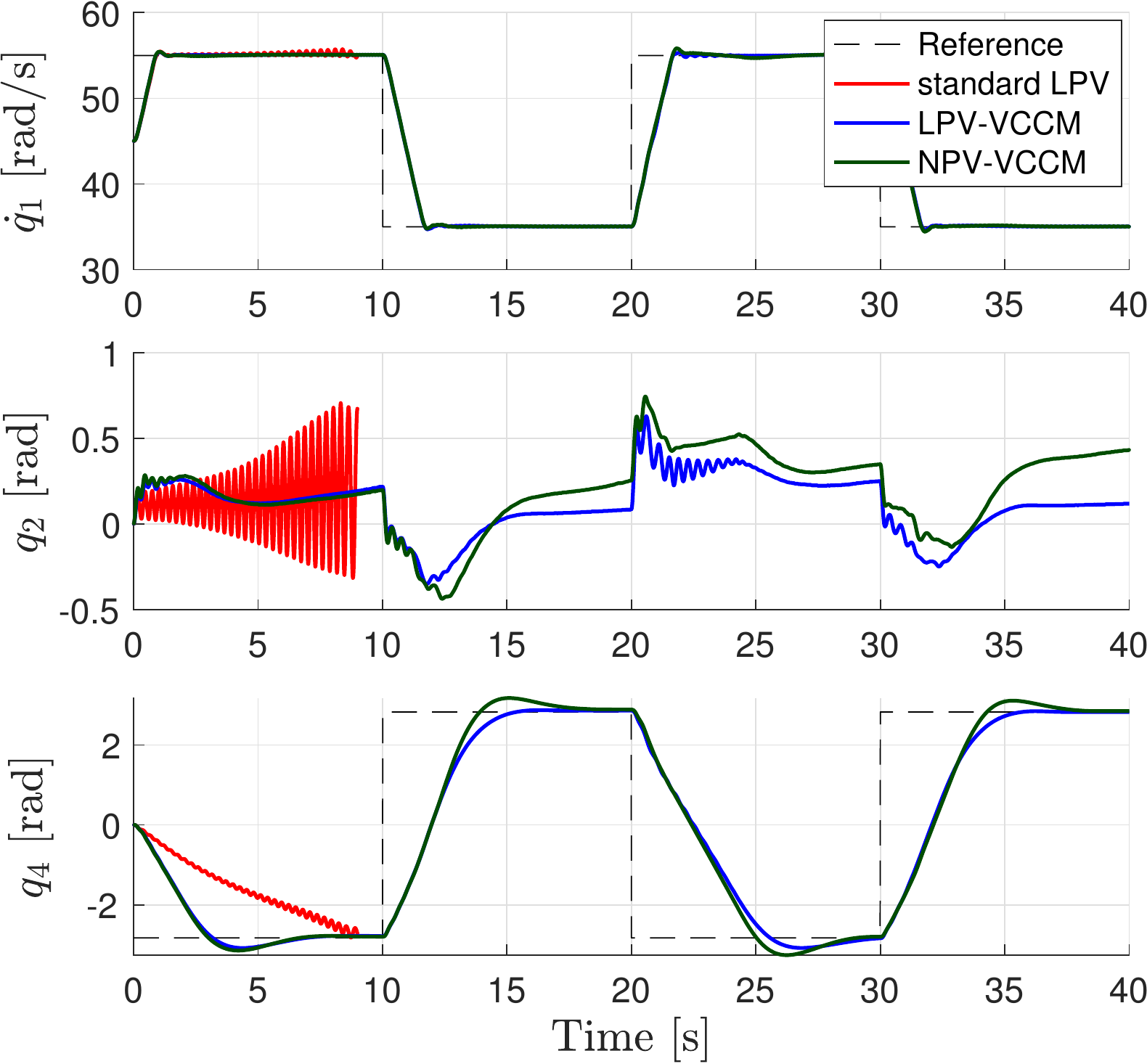} \\
		{\scriptsize (a) Simulation} & {\scriptsize (b) Experiment}
	\end{tabular}
	\caption{OM-2 set-point tracking comparison of controllers obtained from \eqref{eq:rvccm-synthsis}.}\label{fig:om2-performance}
\end{figure}

\section{Conclusion}\label{sec:conclusion}
In this paper, we applied a virtual control contraction metric (VCCM) based nonlinear parameter-varying (NPV) approach to design state-feedback tracking controllers for two (both fully- and under-actuated) operation modes of a control moment gyroscope. This approach includes three steps: 1) choose a NPV embedding, 2) search for a VCCM and 3) realize the controller back into the original state/input space. Since the NPV embedding is non-unique, but essential for control synthesis and realization, we provided two conditions for the choice of NPV models: (\textbf{C1}) the NPV virtual system is universally stabilizable, and (\textbf{C2}) the desired reference is an admissible state trajectory of the NPV embedded system. Condition \textbf{C1} can be easily verified via a convex control synthesis formulation similar to the conventional LPV approach. Condition \textbf{C2} determines whether a control realization can provide closed-loop stability and performance guarantees or not.  As shown in simulation and experimental comparisons, the standard LPV control realization does not ensure closed-loop stability and performance as it does not satisfy Condition \textbf{C2} while the VCCM based realization addresses this issue by utilizing additional freedom in feed-forward control design. Future works include dynamic controller realizations to relax Condition \textbf{C2}.



\end{document}